\newtheorem{theorem}{Theorem}
\newtheorem{lemma}{Lemma}
\newtheorem{proposition}{Proposition}
\newcommand{\sqrtD}[1]{\sqrt{#1}\,}
\newcommand{\mb}{\mathbb}
\newcommand{\sgn}{\text{sign}}
\DeclareMathOperator{\arccot}{arccot}
\title{\LARGE \bf Asymptotic Performance Analysis of Majority Sentiment Detection in Online Social Networks}
\author{Tian Tong$^{1}$ and Rohit Negi$^{2}$
\thanks{*This research was partially supported by NSF awards CCF1422193 and CNS1218823.}
\thanks{Electrical and Computer Engineering Department, Carnegie Mellon University, {$^{1}$\tt\small ttong1@andrew.cmu.edu}, {$^{2}$\tt\small negi@ece.cmu.edu} 
}}
\begin{document}

\maketitle
\thispagestyle{empty}
\pagestyle{empty}

\begin{abstract}
We analyze the problem of majority sentiment detection in Online Social Networks (OSN), and relate the detection error probability to the 
underlying graph of the OSN. Modeling the underlying social network as an Ising Markov random field prior based on a given graph, we show that in the case of the empty graph (independent sentiments) and the chain graph, the detection is always inaccurate, even when the number of users grow to infinity. In the case of the complete graph, the detection is inaccurate if the connection strength is below a certain critical value, while it is asymptotically accurate if the strength is above that critical value, which is analogous to the phase transition phenomenon in statistical physics.
\end{abstract}

\section{Introduction}
\label{sec:introduction}
Online social networks (OSN), such as Facebook and Twitter\cite{howard2013democracy}, have a significant influence on people and  society. The massive data embedded in these networks have turned OSNs into a gold mine for politicians, economists, and sociologists alike to collect, analyze, and understand the views of people. Therefore, detecting and analyzing the sentiments of OSN users is of great interest in recent machine learning and sociology research \cite{kouloumpis2011twitter,thelwall2011sentiment}.

We focus on the problem of {\em majority sentiment detection}, also known as the vote detection, where the majority sentiment is estimated based on {\em noisy measurements} of user sentiments. (Majority sentiment is the one among two binary sentiments, such as `approve'/`disapprove', which predominates among the users.) Related research abounds under various topics such as public opinion studies \cite{lippmann1946public}, voting theory \cite{coughlin1992probabilistic}, and opinion mining \cite{pang2008opinion}. The basic assumptions are that users (or members) in the network are connected by some relationships, such as the friend relationship in Facebook, or follower/followee relationship in Twitter, and that two connected users are more probable to share the same sentiment, since the network typically signifies affinity of opinion. At the same time, automated language processing tools that measure individual sentiments from posts or tweets suffer from noise, due to short size of the text or the inability to recognize sarcasm \cite{negi2014latent}.

In this paper, we attempt to answer an interesting question: how does the network (graph\footnote{We use the terms `graph of OSN' and `network' interchangeably.} of OSN) influence the error performance of (automated) majority sentiment detection? In particular, we wish to investigate whether such detection is asymptotically accurate, i.e., whether the error probability becomes arbitrarily small as the network size, in terms of number of users, grows.
As we will show, the error performance is strongly related to the graph of the OSN. It involves two levels of influence: the graph structure and the strength of the connections.  

To analyze the performance of majority sentiment detection under various network topologies, we model the network by an Ising Markov Random Field (MRF) model \cite{binder2001ising}. This model was first introduced in statistical physics to interpret the paramagnetic-ferromagnetic phase transition phenomenon.
We first provide general upper and lower bounds on the asymptotic detection error probability. Next, we consider special cases of networks to illustrate the phase-transition-like phenomenon in the error probability behavior, where the detection is either inaccurate or is asymptotically accurate.   Specifically, we  show that in
the case of empty graph and chain graph, both of which are weakly connected networks, {\em the detection is always inaccurate, even when the number of users grows to infinity}.
This result appears to be of interest in its own right, since one may naively expect accurate performance in the limit of infinite users.
On the other hand, in the complete graph (with standard scaling down of the connection strength),  {\em there exists a critical value for connection strength} 
(analogous to the critical temperature in statistical physics), below which the detection is inaccurate while above which the detection is asymptotically accurate.

In Section \ref{sec:model}, we introduce the Ising model of OSN sentiments and analyze majority
sentiment detection in the independent sentiment case. In Section \ref{sec:mainTheory}, we obtain bounds on the asymptotic sentiment detection error probability for arbitrary graphs of the Ising model. In Section \ref{sec:networkExp}, we consider special cases of graphs to calculate these bounds. Section \ref{sec:num} shows numerical results while Section \ref{sec:conclusion} concludes the paper.

\section{System Model}
\label{sec:model}
\begin{figure}[t]
\centerline{\includegraphics[width=\linewidth]{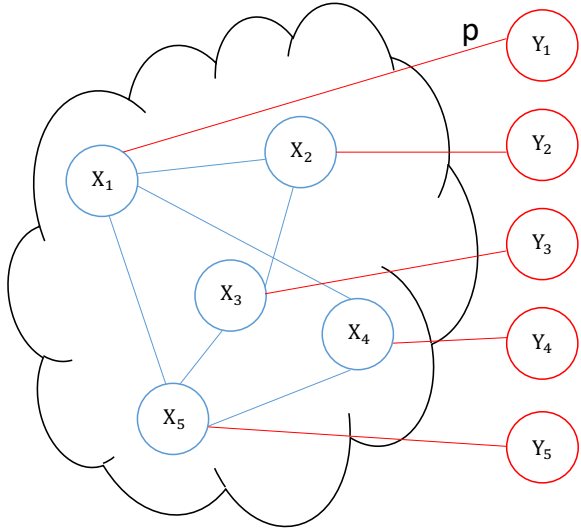}}
\caption{Markov Random Field model of sentiment detection.}
\label{fig_model}
\vspace*{-0.2cm}
\end{figure}

The social network structure is modeled as an undirected graph, as shown in Fig.~\ref{fig_model}. Let $\mathbf{X} = (X_1, \dots, X_n)^T \in \{-1, 1\}^n$ denote the vector of binary sentiments of $n$ members, where $1$ or $-1$ denote positive or negative sentiments respectively. These sentiments are unknown to observers. What is observed is a noisy measurement of $\mathbf{X}$, called $\mathbf{Y}$. $\mathbf{Y} = (Y_1, \dots, Y_n)^T \in \{-1, 1\}^n$ is modeled as conditionally independent binary measurements of $\mathbf{X}$ each with cross-over probability $p$, i.e., the output of a binary symmetric channel with input $\mathbf{X}$. Without loss of generality, we assume that $p < \frac{1}{2}$. This model was first introduced in \cite{negi2014latent} for the `latent sentiment detection' problem, with more details about the probability mass function of $\bf{X}$ following in Section~\ref{section_networkModel}. 

The majority sentiment is defined as
\begin{align}
m = \sgn({\bf 1}^T {\bf X}),
\end{align}
where ${\bf 1}$ denotes the vector of all ones. We assume that $n$ is odd to avoid trivial ambiguity.

We will use the majority vote detector as our estimator for the majority sentiment:
\begin{align}
\label{eq_majorDetector}
\hat{m} = \sgn({\bf 1}^T {\bf Y}).
\end{align}
This detector estimates the majority sentiment as the majority of noisy measurements. In general, it is not the optimal Maximum Aposteriori Probability
(MAP) detector. However, it will be sufficient to illustrate the key insights of this paper, promised in Section \ref{sec:introduction}, as will be shown below. 

The detection error probability (equivalently, classification error probability of the two sentiment class problem) is 
\begin{align}
\label{eq_errorProb}
P^{(n)}_e = \mb{P}(m \neq \hat{m}).
\end{align}
We wish to investigate whether
the majority sentiment detector is asymptotically accurate, i.e., whether $P^{(n)}_e$ becomes arbitrarily small when $n$ is sufficiently large. 

In the paper, we denote $\overline{X_n} = \frac{1}{n}\sum_{i=1}^n X_i$ as the sample average, $X_{\lim}$ as a variable distributed as the limiting distribution of $\sqrtD{n}\overline{X_n}$, and $\xrightarrow{d}$ as convergence in distribution.

\subsection{I.I.D Case}
\label{section_empty}
Before discussing  interesting networks, let us first consider the simplest case: $X_i$s are independent and identical distributed (i.i.d.) random variables, taking values $-1$ or $1$, each with probability $1/2$. In this example, the members are not connected, or in other words the network is an empty graph. This is the case in contemporary vote detection schemes where voters are assumed to independently vote, and the  underlying network is not modeled \cite{coughlin1992probabilistic}. For this case, the detector we adopt in (\ref{eq_majorDetector}) is actually the optimal MAP detector, so that there is no loss of error performance. 

For this case, we can derive the exact value of the asymptotic error probability, as shown in Proposition~\ref{pp_Empty-Graph}. 

\begin{proposition}
\label{pp_Empty-Graph}
In the i.i.d. (empty graph) case, $$\lim_{n\to\infty} P_e = \frac{2}{\pi}\arcsin\sqrt{p} > 0.$$
\end{proposition}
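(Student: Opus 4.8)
The plan is to reduce the error probability to a sign-disagreement probability for a bivariate Gaussian obtained from a joint central limit theorem. Write $Y_i = X_i Z_i$, where the $Z_i$ are i.i.d., independent of $\mathbf X$, with $\mb P(Z_i = -1) = p$ and $\mb P(Z_i = 1) = 1-p$; this is exactly the BSC model. Since $n$ is odd, both $\sum_i X_i$ and $\sum_i Y_i$ are nonzero integers, so the event $m \neq \hat m$ coincides with $\big(\sum_i X_i\big)\big(\sum_i Y_i\big) < 0$, and therefore $P^{(n)}_e = \mb P\big( (\sqrtD{n}\,\overline{X_n})(\sqrtD{n}\,\overline{Y_n}) < 0 \big)$, writing $\overline{Y_n} = \frac1n\sum_i Y_i$.

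Next I would apply the bivariate CLT to the i.i.d. pairs $(X_i, Y_i)$. One checks $\mb E X_i = \mb E Y_i = 0$ (each $Y_i$ is again uniform on $\{-1,1\}$), $\mathrm{Var}(X_i) = \mathrm{Var}(Y_i) = 1$, and $\mb E[X_i Y_i] = \mb E[X_i^2]\,\mb E[Z_i] = 1-2p$. Hence $(\sqrtD{n}\,\overline{X_n},\ \sqrtD{n}\,\overline{Y_n})$ converges in distribution to a centered bivariate Gaussian $(U,V)$ with unit variances and correlation $\rho = 1-2p \in (0,1)$.

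Then, because the limiting law is continuous and the set $\{uv = 0\}$ (the two coordinate axes) has zero Gaussian measure, the indicator $\mb 1\{uv<0\}$ is a.s.\ continuous with respect to the limit law, so the portmanteau/continuous-mapping theorem yields $P^{(n)}_e \to \mb P(UV < 0)$. Finally I would evaluate $\mb P(UV<0)$ for a standard bivariate Gaussian with correlation $\rho = \cos\phi$, $\phi = \arccos\rho \in (0,\tfrac{\pi}{2})$: representing $(U,V) = (\langle G, e_1\rangle, \langle G, e_2\rangle)$ with $G$ an isotropic planar Gaussian and $e_1,e_2$ unit vectors at angle $\phi$, the event that $U$ and $V$ have opposite signs corresponds to $G$ lying in two antipodal angular sectors of total measure $2\phi$, so $\mb P(UV<0) = \phi/\pi = \frac1\pi\arccos(1-2p)$. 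The half-angle identity $1-\cos\phi = 2\sin^2(\phi/2)$ gives $\sin(\phi/2) = \sqrtD{p}$, i.e.\ $\phi = 2\arcsin\sqrtD{p}$, whence $\mb P(UV<0) = \frac{2}{\pi}\arcsin\sqrtD{p}$, which is strictly positive since $p>0$.

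The main obstacle — essentially the only nonroutine point — is making the third step airtight: the pre-limit quantities live on a lattice, so the passage from $P^{(n)}_e$ to $\mb P(UV<0)$ must go through a weak-convergence (portmanteau) argument on the open set $\{uv<0\}$ rather than a naive substitution, and one must confirm that the boundary $\{uv=0\}$ is null under the Gaussian limit. The covariance computation and the orthant-probability evaluation are standard.
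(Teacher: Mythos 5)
Your proposal is correct and follows essentially the same route as the paper: reduce $P_e^{(n)}$ to $\mb{P}\bigl(\sqrtD{n}\,\overline{X_n}\cdot\sqrtD{n}\,\overline{Y_n}<0\bigr)$, apply the bivariate CLT to the i.i.d.\ pairs $(X_i,Y_i)$ with correlation $1-2p$, and evaluate the Gaussian sign-disagreement (arcsine) probability. You actually supply more detail than the paper at the two points it glosses over, namely the portmanteau justification for passing to the limit and the angular-sector computation of $\mb{P}(UV<0)$.
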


\begin{proof}
Since $(X_i, Y_i)$s are i.i.d. random variables, with means zero, and variances $\mb{E}[X_i^2] = \mb{E}[Y_i^2] = 1$, $\mb{E}[X_i Y_i] = 1-2p$, the multidimensional central limit theorem tells that:
\begin{align}
\label{eq_emptyCLT}
\left(\begin{matrix} \sqrtD{n}\overline{X_n} \\ \sqrtD{n}\overline{Y_n}\end{matrix}\right)\xrightarrow{d} \left(\begin{matrix} X_{\lim} \\ Y_{\lim}\end{matrix}\right) \sim N\left(\left(\begin{matrix} 0 \\ 0\end{matrix}\right), \left(\begin{matrix} 1 & 1-2p \\ 1-2p & 1\end{matrix}\right)\right).
\end{align}
Since by definition (\ref{eq_errorProb}), $P_e^{(n)}= \mb{P}(\sqrtD{n}\overline{X_n} \sqrtD{n}\overline{Y_n} < 0)$, we can calculate its limit from (\ref{eq_emptyCLT}) as 
\begin{align}
\lim_{n\to\infty} P^{(n)}_e &= \mb{P}(X_{\lim} Y_{\lim} < 0) \nonumber = \frac{2}{\pi}\arcsin \sqrt{p}, 
\end{align}
where the limit exists thanks to the convergence in distribution (\ref{eq_emptyCLT}). 
\end{proof}

In this special case, the detection error probability tends to some positive constant, and thus never reduces to $0$, even with infinite
number of sentiments. Inherently, this inaccuracy results from the absence of connections. {\em This result is perhaps counterintuitive, since one may have expected that the majority sentiment can be detected accurately when the user size tends to infinity, as typically happens in the case of single parameter estimation.}

\subsection{Network Model}
\label{section_networkModel}
The network model characterizes the probability mass function of ${\bf X}$. In general, it could be any prior implying that members with connections are more probable to have the same sentiment. In this paper following \cite{negi2014latent}, we adopt a homogeneous Ising MRF prior as
\begin{align}
\label{eq_isingPrior}
p({\bf x}) = \frac{\exp(\theta {\bf x}^TA {\bf x})}{Z_n(\theta)}.
\end{align}
Here $A$ denotes the symmetric graph adjacency matrix, with $A_{ij}=0/1$ denoting absence/presence of an edge respectively. $\theta > 0$ is called the inverse temperature parameter. We remark that $\theta$ characterizes the connection strength in the network, namely, connected members are more probable to share the same sentiment in case of larger $\theta$. The normalizer $Z_n(\theta) = \sum_{\mathbf{x}\in\{-1,1\}^n}\exp(\theta {\bf x}^T A {\bf x})$ is called the partition function. 

The joint distribution of $({\bf x}, {\bf y})$ can be written as another Ising MRF as 
\begin{align}
p(\mathbf{x}, \mathbf{y}) = \frac{\exp(\theta {\bf x}^TA {\bf y} + \varepsilon {\bf y}^T {\bf x})}{Z_n(\theta)(2\cosh \varepsilon)^n}.
\end{align}
Here $\varepsilon$ is defined by $p = \frac{\exp(-\varepsilon)}{\exp(\varepsilon) + \exp(-\varepsilon)}$. In the next section, we analyze the asymptotic 
error probability under the above network model. For this purpose, we assume that as $n$ grows, there is a given sequence of graphs of $n$ vertices that models the OSN as (\ref{eq_isingPrior}). 

\section{Error Probability under Network Model}
\label{sec:mainTheory}
In this section, we will first derive an upper bound for the detection error probability in Theorem~\ref{thm_Hoeffd}, and two exact asymptotic results in Theorems~\ref{thm_Q} and \ref{thm_lim}, assuming a given sequence of graph adjacency matrices $A$. After that, we will show that the asymptotic performance of the detection error probability is related to the concentration behavior of $\sqrtD{n}\overline{X_n}$ in Theorem~\ref{thm_tt}.
\begin{theorem}
\label{thm_Hoeffd}
$$P_e^{(n)} \le \mb{E}\left[\exp\left(\frac{-(1-2p)^2 }{8(1-p)^2}(\sqrtD{n}\overline{X_n})^2\right)\right].$$ Here the expectation is taken over ${\bf X}$.
\end{theorem}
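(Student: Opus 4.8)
The plan is to bound the conditional error probability $\mb{P}(\hat{m}\neq m\mid \mathbf{X}=\mathbf{x})$ for each fixed realization $\mathbf{x}\in\{-1,1\}^n$, and then take the expectation over $\mathbf{X}$. Fix $\mathbf{x}$ and set $S:=\mathbf{1}^T\mathbf{x}=n\,\overline{x_n}$. Since $n$ is odd, $S\neq 0$; and because the binary symmetric channel is symmetric while $m=\sgn(\mathbf{1}^T\mathbf{x})$ and $\hat{m}=\sgn(\mathbf{1}^T\mathbf{Y})$ are odd in their arguments, replacing $\mathbf{x}$ by $-\mathbf{x}$ leaves both $\mb{P}(\hat{m}\neq m\mid\mathbf{x})$ and $(\sqrtD{n}\,\overline{x_n})^2$ unchanged; hence we may assume $S>0$, so that $m=1$ and the error event becomes $\{\mathbf{1}^T\mathbf{Y}<0\}$. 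Conditioned on $\mathbf{X}=\mathbf{x}$ the measurements $Y_i$ are independent, supported in $[-1,1]$, with $\mb{E}[Y_i\mid x_i]=(1-2p)x_i$, so $T:=\mathbf{1}^T\mathbf{Y}$ has conditional mean $(1-2p)S>0$.

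First I would apply Hoeffding's inequality to $T=\sum_{i=1}^n Y_i$: as a sum of $n$ independent summands each supported in $[-1,1]$, it satisfies $\mb{P}(T-\mb{E}[T\mid\mathbf{x}]\le -u)\le\exp(-u^2/(2n))$ for every $u>0$. Taking $u=(1-2p)S>0$ and using $S^2/n=(\sqrtD{n}\,\overline{x_n})^2$ gives
\begin{align*}
\mb{P}(\hat{m}\neq m\mid\mathbf{x}) \le \exp\!\left(-\frac{(1-2p)^2S^2}{2n}\right) = \exp\!\left(-\frac{(1-2p)^2}{2}\,(\sqrtD{n}\,\overline{x_n})^2\right).
\end{align*}
Since $p<\tfrac12$ implies $(1-p)^2\ge\tfrac14$, we have $\tfrac{(1-2p)^2}{2}\ge\tfrac{(1-2p)^2}{8(1-p)^2}$, so this pointwise bound already implies the inequality claimed in the statement. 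If one wants the constant $\tfrac{1}{8(1-p)^2}$ to come out exactly (rather than the sharper $\tfrac12$), I would instead run a Bernstein-type bound on $V:=\#\{i:Y_i=-1\}$, whose conditional mean is $\tfrac n2-\tfrac{(1-2p)S}{2}$, conditional variance is exactly $np(1-p)$, and centered summands are bounded by $1-p$; since $\{\mathbf{1}^T\mathbf{Y}<0\}=\{V\ge\tfrac{n+1}{2}\}$ forces $V-\mb{E}[V\mid\mathbf{x}]\ge\tfrac12(1-2p)S$, Bernstein's inequality produces the exponent $\tfrac{(1-2p)^2}{8(1-p)^2}(\sqrtD{n}\,\overline{x_n})^2$ after the lower-order term in its denominator is absorbed using $|S|\le n$.

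Finally, the pointwise bound holds for every $\mathbf{x}\in\{-1,1\}^n$ and uses only the BSC structure of $\mathbf{Y}\mid\mathbf{X}$ and not the Ising prior (consistent with the theorem being stated for an arbitrary adjacency matrix $A$), so taking the expectation over $\mathbf{X}$ yields
\begin{align*}
P_e^{(n)} = \mb{E}_{\mathbf{X}}\!\left[\mb{P}(\hat{m}\neq m\mid\mathbf{X})\right] \le \mb{E}\!\left[\exp\!\left(-\frac{(1-2p)^2}{8(1-p)^2}\,(\sqrtD{n}\,\overline{X_n})^2\right)\right].
\end{align*}
I do not expect a genuine obstacle: the only substantive facts are the conditional mean shift $\mb{E}[\mathbf{1}^T\mathbf{Y}\mid\mathbf{x}]=(1-2p)\mathbf{1}^T\mathbf{x}$ and the identity $(\mathbf{1}^T\mathbf{x})^2/n=(\sqrtD{n}\,\overline{x_n})^2$. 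The only mildly delicate points are (i) reducing to $S>0$ via the channel's symmetry so that the one-sided tail applies on the correct side, and (ii) reproducing the exact constant $\tfrac{1}{8(1-p)^2}$, which I read as coming from a variance-aware (Bernstein) step rather than plain Hoeffding; since the theorem asserts only an upper bound, either route completes the argument.
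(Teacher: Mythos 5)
Your proposal is correct and follows essentially the same route as the paper: condition on $\mathbf{X}$, apply Hoeffding's inequality to the conditionally independent measurements to bound the one-sided tail by $\exp\left(-c\,(\sqrtD{n}\overline{X_n})^2\right)$, and take the expectation over $\mathbf{X}$; your direct application to $\sum_i Y_i\in[-1,1]^n$ even yields the sharper constant $\tfrac{(1-2p)^2}{2}$, which dominates the stated bound since $p<\tfrac12$. One small correction: the paper's constant $\tfrac{1}{8(1-p)^2}$ does not come from a Bernstein-type variance argument but from applying Hoeffding to the centered variables $Z_i=Y_i-(1-2p)X_i$ with the symmetric (and slightly loose) bound $\lvert Z_i\rvert\le 2(1-p)$, so your Bernstein detour, while workable, is unnecessary.
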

\begin{proof}
Let $Z_i = Y_i - (1-2p)X_i$. Since $Z_i$s are conditionally independent given ${\bf X}$, with $\mb{E}[Z_i|{\bf X}] = 0$ and $\lvert Z_i \rvert \le 2(1-p)$, Hoeffding's inequality tells that average $\overline{Z_n}$ satisfies $\mb{P}(\sqrtD{n}\overline{Z_n} > \epsilon \mid {\bf X}) \le \exp\left(\frac{-\epsilon^2}{8(1-p)^2}\right)$ and $\mb{P}(\sqrtD{n}\overline{Z_n} < -\epsilon \mid {\bf X}) \le \exp\left(\frac{-\epsilon^2}{8(1-p)^2}\right)$ for any $\epsilon > 0$. By definition (\ref{eq_errorProb}):
\begin{align}
\label{eq_PeFormula}
P_e^{(n)} &= \mb{P}(\sqrtD{n} \overline{X_n} \sqrtD{n} \overline{Y_n} < 0) \nonumber\\
&= \mb{P}(\sqrtD{n}\overline{X_n} \sqrtD{n}\overline{Z_n} < -(1-2p)(\sqrtD{n}\overline{X_n})^2) \nonumber \\
&= \mb{E}\left[ \mb{P}(\sqrtD{n}\overline{X_n} \sqrtD{n}\overline{Z_n} < -(1-2p)(\sqrtD{n}\overline{X_n})^2 \mid {\bf X}) \right]\\
& \le \mb{E}\left[\exp\left(\frac{-(1-2p)^2 }{8(1-p)^2}(\sqrtD{n}\overline{X_n})^2\right)\right]. \nonumber
\end{align}
\end{proof}

\begin{theorem}
\label{thm_Q}
$$\liminf_{n\to\infty}P_e^{(n)} = \liminf_{n\to\infty}\mb{E} \left[Q\left(\frac{(1-2p)}{\sqrt{4p(1-p)}}\left\lvert \sqrtD{n} \overline{X_n} \right\rvert \right)\right],$$ where $Q(\cdot)$ is the tail probability of standard normal distribution: $Q(x) = \frac{1}{\sqrt{2\pi}}\int_x^\infty \exp(-t^2/2)dt$.
\end{theorem}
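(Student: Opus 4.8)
The plan is to condition on the sentiment vector ${\bf X}$ and apply a central limit theorem \emph{with an explicit convergence rate} to the conditionally independent measurements ${\bf Y}$. Exactly as in the proof of Theorem~\ref{thm_Hoeffd}, I would start from
\begin{align}
P_e^{(n)} = \mb{P}\!\left(\sqrtD{n}\overline{X_n}\,\sqrtD{n}\overline{Y_n} < 0\right) = \mb{E}\!\left[\mb{P}\!\left(\sqrtD{n}\overline{X_n}\,\sqrtD{n}\overline{Y_n} < 0 \mid {\bf X}\right)\right].
\end{align}
Since $n$ is odd, $\sum_i X_i$ is odd, hence $\overline{X_n}\neq 0$ almost surely; so on $\{\overline{X_n}>0\}$ the inner probability equals $\mb{P}(\sqrtD{n}\overline{Y_n}<0\mid{\bf X})$ and on $\{\overline{X_n}<0\}$ it equals $\mb{P}(\sqrtD{n}\overline{Y_n}>0\mid{\bf X})$.

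Next I would exploit the structure of the binary symmetric channel: given ${\bf X}$, the $Y_i$ are independent with $\mb{E}[Y_i\mid{\bf X}]=(1-2p)X_i$ and $\mathrm{Var}(Y_i\mid{\bf X}) = 1-(1-2p)^2 = 4p(1-p)$, a positive constant not depending on $i$ nor on the realization of ${\bf X}$, and $\lvert Y_i-(1-2p)X_i\rvert\le 2$, so the conditional third absolute central moments are bounded by a universal constant. The Berry--Esseen theorem for independent (not necessarily identically distributed) summands, applied to $\frac{1}{\sqrt{4p(1-p)n}}\sum_i\bigl(Y_i-(1-2p)X_i\bigr)$, then gives
\begin{align}
\left\lvert\,\mb{P}\!\left(\sqrtD{n}\overline{Y_n}\le t\mid{\bf X}\right) - \Phi\!\left(\frac{t-(1-2p)\sqrtD{n}\overline{X_n}}{\sqrt{4p(1-p)}}\right)\right\rvert \le \frac{C}{\sqrt{n}}
\end{align}
for a constant $C$ independent of $t$ and of ${\bf X}$. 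Evaluating at $t=0$ and using $1-\Phi(u)=Q(u)$, on $\{\overline{X_n}>0\}$ the inner probability equals $Q\!\bigl(\tfrac{1-2p}{\sqrt{4p(1-p)}}\sqrtD{n}\overline{X_n}\bigr)+O(1/\sqrt{n})$ and symmetrically on $\{\overline{X_n}<0\}$; in either case the argument of $Q$ is $\tfrac{1-2p}{\sqrt{4p(1-p)}}\lvert\sqrtD{n}\overline{X_n}\rvert$.

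Taking the outer expectation, the uniform error bound yields $\bigl\lvert P_e^{(n)} - \mb{E}\bigl[Q\bigl(\tfrac{1-2p}{\sqrt{4p(1-p)}}\lvert\sqrtD{n}\overline{X_n}\rvert\bigr)\bigr]\bigr\rvert \le C/\sqrt{n}\to 0$, so the two sequences have the same $\liminf$ (indeed the same $\limsup$), which is the claim. The only place care is needed is the \emph{uniformity} of the Berry--Esseen error over all realizations of ${\bf X}$: this is what lets the error term survive both the outer expectation and the limit, and it holds precisely because the conditional variance $4p(1-p)$ is a fixed positive constant while the centered summands are bounded by $2$ irrespective of ${\bf X}$. (If $p=0$ the statement is trivial, since then $P_e^{(n)}=0$.)
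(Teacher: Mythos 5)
Your proof is correct, and it takes a genuinely different (and in one respect sharper) route than the paper's. The paper uses the same starting decomposition, $P_e^{(n)} = \mb{E}\left[\mb{P}\left(\sqrtD{n}\overline{X_n}\,\sqrtD{n}\overline{Y_n} < 0 \mid {\bf X}\right)\right]$, but then invokes only a qualitative conditional CLT (Lindeberg--Feller applied to $Z_i = Y_i - (1-2p)X_i$ given ${\bf X}$, its Lemma~\ref{lemma_CLT}(a)) to argue that the gap $\varepsilon_n({\bf X})$ between the conditional error probability and the $Q(\cdot)$ term vanishes pointwise, and then passes this through the outer expectation by dominated convergence. You instead replace the qualitative CLT by the Berry--Esseen theorem, whose error bound is \emph{uniform} over realizations of ${\bf X}$ because the conditional variance $4p(1-p)$ is a fixed positive constant and the centered summands are bounded by $2(1-p)$. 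This buys you two things: (i) the quantitative conclusion $\lvert P_e^{(n)} - \mb{E}[Q(\cdot)]\rvert \le C/\sqrt{n}$, so the $\limsup$'s agree as well as the $\liminf$'s; and (ii) it sidesteps a genuine delicacy in the paper's argument, namely that ${\bf X}$ is an $n$-dimensional vector whose law changes with $n$, so the pointwise-limit-plus-DCT step is really only clean when the convergence is uniform in ${\bf X}$ --- which is exactly what Berry--Esseen supplies. Your observations that $n$ odd forces $\overline{X_n} \neq 0$ and $\overline{Y_n}\neq 0$ (so that strict and non-strict inequalities coincide and the two sign cases merge into $\lvert \sqrtD{n}\overline{X_n}\rvert$), and that $p=0$ is trivial, correctly close the remaining edge cases.
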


\begin{theorem}
\label{thm_lim}
\begin{enumerate}[(a)]
\item If $\sqrtD{n}\overline{X_n} \xrightarrow{d} \Phi$, where $\Phi$ is a distribution, then
 $$\lim_{n\to\infty}P_e^{(n)} = \int_{-\infty}^\infty Q\left(\frac{(1-2p)}{\sqrt{4p(1-p)}}\left\lvert x \right\rvert \right) \Phi(dx).\footnote{This denotes Lebesgue integral with respect to probability measure $\Phi$.}$$ 

\item Specifically, if $\sqrtD{n}\overline{X_n} \xrightarrow{d} N(0, \sigma^2)$, then $$\lim_{n\to\infty}P_e^{(n)}= \frac{1}{\pi} \arccot\left(\frac{(1-2p)}{\sqrt{4p(1-p)}}\sigma\right) > 0.$$
\end{enumerate}
\end{theorem}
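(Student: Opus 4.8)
The plan is to reduce $P_e^{(n)}$ to the expectation of a fixed bounded continuous functional of $\sqrtD{n}\overline{X_n}$, and then pass to the limit using only the hypothesized weak convergence. Write $c = (1-2p)/\sqrtD{4p(1-p)}$ and $g(x) = Q(c\lvert x\rvert)$. The key observation, which is what already underlies Theorem~\ref{thm_Q}, is that conditioned on $\mathbf{X}$ the variables $Z_i = Y_i - (1-2p)X_i$ from the proof of Theorem~\ref{thm_Hoeffd} are independent with conditional mean $0$, conditional variance exactly $4p(1-p)$ (the same value regardless of $X_i$), and uniformly bounded third absolute moment. Hence the Berry--Esseen theorem gives $\sup_{\mathbf{x}}\sup_{t}\lvert \mb{P}(\sqrtD{n}\overline{Z_n}\le t \mid \mathbf{X}=\mathbf{x}) - F(t)\rvert \le C/\sqrtD{n}$, where $F$ is the cdf of $N(0,4p(1-p))$ and $C$ depends only on $p$. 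Substituting this into the identity $P_e^{(n)} = \mb{E}\bigl[\mb{P}(\sqrtD{n}\overline{X_n}\,\sqrtD{n}\overline{Z_n} < -(1-2p)(\sqrtD{n}\overline{X_n})^2 \mid \mathbf{X})\bigr]$ from the proof of Theorem~\ref{thm_Hoeffd} (using that $n$ odd forces $\overline{X_n}\neq 0$, so the sign of $\sqrtD{n}\overline{X_n}$ is well defined) yields, after separating the cases $\sqrtD{n}\overline{X_n}\gtrless 0$, the bound $\bigl\lvert P_e^{(n)} - \mb{E}[g(\sqrtD{n}\overline{X_n})]\bigr\rvert \le C'/\sqrtD{n} \to 0$.

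For part (a): the function $g(x)=Q(c\lvert x\rvert)$ is continuous on $\mb{R}$ and bounded, since $0\le g(x)\le Q(0)=1/2$. Therefore, by the definition of convergence in distribution (equivalently the portmanteau theorem), the hypothesis $\sqrtD{n}\overline{X_n}\xrightarrow{d}\Phi$ gives $\mb{E}[g(\sqrtD{n}\overline{X_n})] \to \int_{-\infty}^{\infty} g(x)\,\Phi(dx)$. Combining this with the $o(1)$ reduction of the previous paragraph gives $\lim_{n\to\infty}P_e^{(n)} = \int_{-\infty}^{\infty} Q\bigl(\tfrac{1-2p}{\sqrtD{4p(1-p)}}\lvert x\rvert\bigr)\,\Phi(dx)$, which is the claim.

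For part (b): specialize to $\Phi = N(0,\sigma^2)$ and evaluate $I = \int_{-\infty}^{\infty} Q(c\lvert x\rvert)\,\tfrac{1}{\sqrtD{2\pi}\sigma}e^{-x^2/2\sigma^2}\,dx$. Let $U,V$ be independent standard normal variables; substituting $x=\sigma U$ gives $I = \mb{E}[Q(c\sigma\lvert U\rvert)] = \mb{P}(V > c\sigma\lvert U\rvert)$. Since the pair $(U,V)$ is rotationally invariant in the plane, this probability equals the normalized angular measure of the wedge $\{(u,v): v > c\sigma\lvert u\rvert\}$, whose bounding rays $v = \pm c\sigma u$ subtend an opening angle $\pi - 2\arctan(c\sigma)$ at the origin; hence $I = \tfrac{\pi - 2\arctan(c\sigma)}{2\pi} = \tfrac{1}{\pi}\arccot(c\sigma) = \tfrac{1}{\pi}\arccot\bigl(\tfrac{1-2p}{\sqrtD{4p(1-p)}}\sigma\bigr)$, which is strictly positive since $\arccot$ takes values in $(0,\pi)$.

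The only delicate point I expect is the uniformity over the conditioning value $\mathbf{x}$ in the Berry--Esseen step; but this is exactly what is needed to establish Theorem~\ref{thm_Q} in the first place, and it is available precisely because the conditional variance of $Z_i$ is the constant $4p(1-p)$ and the $Z_i$ are uniformly bounded, so the Berry--Esseen constant does not depend on $\mathbf{x}$. Everything else is a routine application of weak convergence against a bounded continuous test function and a standard Gaussian computation.
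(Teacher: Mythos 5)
Your proof is correct, and it follows the same overall skeleton as the paper's --- reduce $P_e^{(n)}$ to $\mb{E}\left[Q\left(\tfrac{1-2p}{\sqrt{4p(1-p)}}\lvert\sqrt{n}\overline{X_n}\rvert\right)\right]$ by exploiting the conditional normality of $\sqrt{n}\overline{Z_n}$ given $\mathbf{X}$, then pass to the limit against the bounded continuous test function $Q(c\lvert x\rvert)$ --- but it justifies the reduction by a genuinely different device. The paper proves the reduction (in Theorem~\ref{thm_Q}, which Theorem~\ref{thm_lim} ``mimics'') via the conditional Lindeberg--Feller CLT of Lemma~\ref{lemma_CLT}(a), giving only pointwise-in-$\mathbf{X}$ convergence of the error term $\varepsilon_n(\mathbf{X})$, and then invokes dominated convergence to kill $\mb{E}[\varepsilon_n(\mathbf{X})]$; this step is delicate because $\mathbf{X}$ is an $n$-dimensional vector whose law changes with $n$, so there is no single fixed measure space on which the DCT is literally being applied. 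Your Berry--Esseen argument replaces this with a bound $\sup_{\mathbf{x}}\lvert\varepsilon_n(\mathbf{x})\rvert \le C'/\sqrt{n}$ that is uniform over the conditioning value, which makes the interchange of limit and expectation trivial no matter how the prior on $\mathbf{X}$ varies with $n$; the uniformity is available exactly for the reason you identify (constant conditional variance $4p(1-p)$ and bounded $Z_i$). You also supply two computations the paper leaves implicit: the portmanteau step in part (a) and the rotational-invariance evaluation of $\mb{P}(V > c\sigma\lvert U\rvert)$ as $\tfrac{1}{\pi}\arccot(c\sigma)$ in part (b), both of which check out (including the use of $n$ odd to exclude $\overline{X_n}=0$, and the passage from $\le$ to $<$ via continuity of the Gaussian cdf). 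In short, your route proves slightly more (a quantitative $O(1/\sqrt{n})$ reduction) and is more robust at the one point where the paper's argument is thinnest.
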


To prove Theorems~\ref{thm_Q} and \ref{thm_lim}, we first derive a central limit theorem type result in Lemma~\ref{lemma_CLT}. This will follow from the conditional independence of $Y_i$s given ${\bf X}$, and reveals that $\sqrtD{n}(\overline{Y_n} - (1-2p)\overline{X_n})$ tends to a normal distribution  conditioned on ${\bf X}$; besides that, when $\sqrt{n}\overline{X_n}$ converges in distribution, the joint limiting distribution of $\sqrt{n}(\overline{X_n}, \overline{Y_n})$ can also be obtained.
\begin{lemma}
\label{lemma_CLT} 
\begin{enumerate}[(a)]
\item Conditional convergence: For all ${\bf X}$, $$\sqrt{n}(\overline{Y_n} - (1-2p)\overline{X_n}) \mid {\bf X} \xrightarrow{d} N(0, 4p(1-p)).$$ 
\item Unconditional convergence: $$\sqrt{n}(\overline{Y_n} - (1-2p)\overline{X_n}) \xrightarrow{d} N(0, 4p(1-p)).$$
\item Joint convergence: If $\sqrtD{n}\overline{X_n} \xrightarrow{d} \Phi$, then $$\sqrt{n}\left(\overline{X_n}, \overline{Y_n} - (1-2p)\overline{X_n}\right) \xrightarrow{d} \left(\Phi, N(0, 4p(1-p))\right),$$ where the two limiting distributions are independent.
\end{enumerate} 
\end{lemma}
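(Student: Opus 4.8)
The plan is to establish the three parts in sequence: part~(a) is a conditional central limit theorem and carries the real content, while parts~(b) and~(c) follow from it through characteristic-function manipulations.

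\textbf{Part (a).} Fix any sequence of sentiment vectors ${\bf X}={\bf x}$ and reuse the increments $Z_i=Y_i-(1-2p)X_i$ from the proof of Theorem~\ref{thm_Hoeffd}. Conditioned on ${\bf X}$, the $Z_i$ are independent with $\mb{E}[Z_i\mid{\bf X}]=0$, variance $\mb{E}[Y_i^2\mid X_i]-(1-2p)^2=1-(1-2p)^2=4p(1-p)$, and $|Z_i|\le 2(1-p)$. They are \emph{not} identically distributed, since the law of $Z_i$ depends on $\sgn(X_i)$, so I would invoke the Lyapunov (equivalently Lindeberg--Feller) form of the CLT rather than the classical one. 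With $s_n^2:=\sum_{i=1}^n\mathrm{Var}(Z_i\mid{\bf X})=4np(1-p)$ and $\mb{E}[|Z_i|^3\mid{\bf X}]\le 8(1-p)^3$, the Lyapunov ratio $s_n^{-3}\sum_{i=1}^n\mb{E}[|Z_i|^3\mid{\bf X}]$ is bounded by $C(p)/\sqrtD{n}$ for a constant depending only on $p$, hence tends to $0$. Therefore $s_n^{-1}\sum_{i=1}^nZ_i\xrightarrow{d}N(0,1)$, that is, $\sqrtD{n}\bigl(\overline{Y_n}-(1-2p)\overline{X_n}\bigr)=\frac{1}{\sqrtD{n}}\sum_{i=1}^nZ_i\xrightarrow{d}N(0,4p(1-p))$. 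The bound $C(p)/\sqrtD{n}$ does not depend on the chosen ${\bf x}$, a point I will reuse in~(c).

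\textbf{Parts (b) and (c).} Let $\psi_n(t,{\bf x})$ be the characteristic function of $\sqrtD{n}(\overline{Y_n}-(1-2p)\overline{X_n})$ conditioned on ${\bf X}={\bf x}$; by part~(a), $\psi_n(t,{\bf x})\to e^{-2p(1-p)t^2}$ for every fixed sequence, and $|\psi_n|\le 1$. For~(b), the unconditional characteristic function $\mb{E}[\psi_n(t,{\bf X})]$ converges to $e^{-2p(1-p)t^2}$ by bounded convergence, and L\'evy's continuity theorem gives the claim. For~(c), put $S_n=\sqrtD{n}\overline{X_n}$ and $T_n=\sqrtD{n}(\overline{Y_n}-(1-2p)\overline{X_n})$; since $S_n$ is a function of ${\bf X}$, conditioning on ${\bf X}$ gives
\begin{align*}
\mb{E}\bigl[e^{isS_n+itT_n}\bigr]
&=\mb{E}\bigl[e^{isS_n}\,\psi_n(t,{\bf X})\bigr] \\
&=e^{-2p(1-p)t^2}\,\mb{E}\bigl[e^{isS_n}\bigr]+R_n,
\end{align*}
where $|R_n|\le\sup_{{\bf x}\in\{-1,1\}^n}\bigl|\psi_n(t,{\bf x})-e^{-2p(1-p)t^2}\bigr|$. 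The first term tends to $e^{-2p(1-p)t^2}\int e^{isx}\,\Phi(dx)$ by the hypothesis $S_n\xrightarrow{d}\Phi$ and the continuity theorem, and $R_n\to 0$ by the uniform estimate below. Thus the joint characteristic function tends to $\bigl(\int e^{isx}\,\Phi(dx)\bigr)e^{-2p(1-p)t^2}$, which factors into the two marginal limiting characteristic functions; this is precisely joint convergence to the product law $(\Phi,N(0,4p(1-p)))$.

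\textbf{Main obstacle.} The one non-routine ingredient is the uniform bound $\sup_{{\bf x}\in\{-1,1\}^n}\bigl|\psi_n(t,{\bf x})-e^{-2p(1-p)t^2}\bigr|\to 0$ used for $R_n$; the pointwise statement of part~(a) is not enough by itself because $S_n$ also varies with $n$. I would prove it by noting that, given ${\bf x}$, each $Z_i$ follows one of only \emph{two} laws according to $\sgn(x_i)$, both having mean $0$ and variance $4p(1-p)$, so $\psi_n(t,{\bf x})=\bigl(\varphi_+(t/\sqrtD{n})\bigr)^{k}\bigl(\varphi_-(t/\sqrtD{n})\bigr)^{n-k}$, where $k=\#\{i:x_i=1\}$ and $\varphi_\pm$ are two fixed characteristic functions. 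The second-order Taylor expansion $\varphi_\pm(u)=1-2p(1-p)u^2+o(u^2)$, with a remainder uniform in the sign, yields $\log\psi_n(t,{\bf x})=-2p(1-p)t^2+o(1)$ with $o(1)$ uniform in $k$, so the supremum vanishes. Alternatively one could quote a Berry--Esseen bound for sums of independent, non-identically-distributed summands, whose constant is controlled by the uniform moment bounds from part~(a).
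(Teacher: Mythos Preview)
Your argument is correct and follows the same skeleton as the paper's proof: a Lindeberg--Feller CLT (you use the equivalent Lyapunov form) for part~(a), then characteristic functions together with bounded/dominated convergence for parts~(b) and~(c). The one substantive difference is in part~(c): the paper writes $\Delta_n({\bf X})=\psi_n(t,{\bf X})-e^{-2p(1-p)t^2}\to 0$ pointwise and invokes dominated convergence to dispose of the cross term $\mb{E}[e^{j\omega\sqrtD{n}\overline{X_n}}\Delta_n({\bf X})]$, whereas you insist on the \emph{uniform} estimate $\sup_{{\bf x}\in\{-1,1\}^n}|\psi_n(t,{\bf x})-e^{-2p(1-p)t^2}|\to 0$ and supply it via the two-law factorisation $\psi_n(t,{\bf x})=\varphi_+(t/\sqrtD{n})^{k}\varphi_-(t/\sqrtD{n})^{n-k}$. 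Your caution is well placed: since the law of ${\bf X}$ lives on $\{-1,1\}^n$ with $n$ changing, a literal application of DCT is not available, and the paper's step tacitly relies on exactly the uniformity you prove. So the two proofs agree in spirit, but yours makes explicit the estimate that the paper's DCT appeal leaves implicit.
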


\begin{proof}
Let $Z_i = Y_i - (1-2p)X_i$. By L\'{e}vy's continuity theorem \cite{dudley2002real}, it is equivalent to prove the pointwise convergence of characteristic functions:
\begin{align*}
\lim_{n\to\infty} &\mb{E}[\exp(j\beta\sqrtD{n}\overline{Z_n}) | {\bf X}] = \exp(-4p(1-p)\beta^2/2), \\ 
\lim_{n\to\infty} &\mb{E}\left[\exp(j\beta\sqrtD{n}\overline{Z_n})\right] = \exp(-4p(1-p)\beta^2/2), \\ \lim_{n\to\infty} &\mb{E}\left[\exp(j\omega\sqrtD{n}\overline{X_n} + j\beta\sqrtD{n}\overline{Z_n})\right] = 
\\& \qquad \qquad \phi(\omega)\exp(-4p(1-p)\beta^2/2), 
\end{align*}
where $\phi(\cdot)$ denotes the characteristic function of $\Phi$. 

For part (a), since $Z_i$s are conditionally independent given ${\bf X}$, with $\mb{E}[Z_i| {\bf X}] = 0$, $\mb{E}[Z_i^2 | {\bf X}] = 4p(1-p)$, and $|Z_i| \le 2(1-p)$, the Lindeberg condition: $\forall \epsilon > 0$, {\small $$\lim_{n\to\infty}\frac{1}{n 4p(1-p)}\sum_{i=1}^n \mb{E}\left[Z^2_i I\left(|Z_i| \ge \epsilon \sqrt{n 4p(1-p)}\right) \mid {\bf X} \right] = 0$$}
is satisfied, where $I(\cdot)$ is the indicator function. Lindeberg-Feller central limit theorem \cite{dudley2002real} tells that: 
\begin{align}
\label{eq_CLTa}
\lim_{n\to\infty} \mb{E}[\exp(j\beta\sqrtD{n}\overline{Z_n})\mid {\bf X}] = \exp(-4p(1-p)\beta^2/2),
\end{align}
where RHS of (\ref{eq_CLTa}) is the characteristic function of $N(0, 4p(1-p))$.

For part (b), extend the result in (a) to the unconditional version. Notice that $\lvert \mb{E}[\exp(j\beta\sqrtD{n}\overline{Z_n})\mid \mathbf{X}] \rvert \le 1$, thus
\begin{align*}
\lim_{n\to\infty}&\mb{E}[\exp(j\beta\sqrtD{n}\overline{Z_n}] = \lim_{n\to \infty}\mb{E}\left[\mb{E}[\exp(j\beta\sqrtD{n}\overline{Z_n})\mid \mathbf{X}]\right] \\
&= \mb{E}\left[\lim_{n\to\infty}\mb{E}[\exp(j\beta\sqrtD{n}\overline{Z_n})\mid \mathbf{X}]\right] \\
&= \exp(-4p(1-p)\beta^2/2),
\end{align*}
where the expectation and limit are exchanged by Lebesgue's dominated convergence theorem \cite{dudley2002real}.

For part (c), define $\Delta_n({\bf X}) = \mb{E}[\exp(j\beta\sqrtD{n}\overline{Z_n})\mid \mathbf{X}] - \exp(-4p(1-p)\beta^2/2)$. By equation (\ref{eq_CLTa}), $\lim_{n\to\infty} \Delta_n({\bf X}) = 0, \forall {\bf X}$. So,
\begin{align}
\lim_{n\to\infty} &\mb{E}\left[\exp(j\omega\sqrtD{n}\overline{X_n} + j\beta\sqrtD{n}\overline{Z_n})\right] \nonumber\\
&= \lim_{n\to \infty}\mb{E}\left[\exp(j\omega\sqrtD{n}\overline{X_n})\mb{E}[\exp(j\beta\sqrtD{n}\overline{Z_n})\mid \mathbf{X}]\right] \nonumber \\
&= \lim_{n\to \infty}\mb{E}\left[\exp(j\omega\sqrtD{n}\overline{X_n})\exp(-4p(1-p)\beta^2/2)\right] \nonumber \\
& \qquad + \lim_{n\to \infty}\mb{E}[\exp(j\omega\sqrtD{n}\overline{X_n}) \Delta_n(\mathbf{X})] \label{eq_Lbgdelta}\\
&= \phi(\omega)\exp(-4p(1-p)\beta^2/2).\nonumber
\end{align}
In equation (\ref{eq_Lbgdelta}), since $\lvert \Delta_n({\bf X})\rvert \le 2$ is bounded, $\lim_{n\to \infty}\mb{E}[\exp(j\omega\sqrtD{n}\overline{X_n}) \Delta_n(\mathbf{X})] = 0$ by Lebesgue's dominated convergence theorem \cite{dudley2002real}. 
\end{proof}

Now we use Lemma~\ref{lemma_CLT} to prove Theorems~\ref{thm_Q} and \ref{thm_lim}. 

\begin{proof}[Theorem~\ref{thm_Q}]
Start from equation (\ref{eq_PeFormula}). Define $\varepsilon_n({\bf X}) = \mb{P}(\sqrtD{n}\overline{X_n} \sqrtD{n}\overline{Z_n} < -(1-2p)(\sqrtD{n}\overline{X_n})^2 \mid {\bf X}) - Q\left(\frac{(1-2p)}{\sqrt{4p(1-p)}}\left\lvert \sqrtD{n} \overline{X_n} \right\rvert \right)$. From the limiting distribution in Lemma~\ref{lemma_CLT} part (a), $\lim_{n\to\infty} \varepsilon_n({\bf X}) = 0, \forall {\bf X}$. So,
\begin{align*}
\liminf_{n\to\infty} P_e^{(n)} &= \liminf_{n\to\infty}\mb{E} [\mb{P}(\sqrtD{n}\overline{X_n} \sqrtD{n}\overline{Z_n} < \\
& \qquad \qquad \qquad -(1-2p)(\sqrtD{n}\overline{X_n})^2 \mid {\bf X}) ]\\
&= \liminf_{n\to\infty} \mb{E} \left[Q\left(\frac{(1-2p)}{\sqrt{4p(1-p)}}\left\lvert  \sqrtD{n}\overline{X_n} \right\rvert \right)\right] \\
& \qquad \qquad \qquad + \liminf_{n\to\infty} \mb{E} \left[\varepsilon_n({\bf X})\right] \\
&= \liminf_{n\to\infty} \mb{E}\left[Q\left(\frac{(1-2p)}{\sqrt{4p(1-p)}}\left\lvert \sqrtD{n} \overline{X_n} \right\rvert \right)\right].
\end{align*}
Here $\liminf_{n\to\infty} \mb{E} \left[\varepsilon_n({\bf X})\right] = \mb{E}\left[ \liminf_{n\to\infty}\varepsilon_n({\bf X}) \right] = 0$ by Lebesgue's dominated convergence theorem \cite{dudley2002real}, since $\lvert \varepsilon_n({\bf X}) \rvert \le 1$.
\end{proof}

\begin{proof}[Theorem~\ref{thm_lim}]
Mimic the proof of Theorem~\ref{thm_Q}, with all `$\liminf$' replaced by `$\lim$'.
\end{proof}

Finally, based on Theorems~\ref{thm_Hoeffd}, \ref{thm_Q} and \ref{thm_lim}, we prove Theorem~\ref{thm_tt}: whether the detection error probability tends to $0$ is exactly determined by whether $\sqrtD{n}\overline{X_n}$ asymptotically stays away from $0$, in probability. 
\begin{theorem}
\label{thm_tt}
\begin{enumerate}[(a)]
\item \label{part_tt=0}
If $\forall B > 0$, $\lim_{n\to\infty}\mb{P}(\lvert \sqrtD{n}\overline{X_n}\rvert \le B) = 0$, then $\lim_{n\to\infty}P_e^{(n)}= 0$. 
\item \label{part_tt>0}
If $\exists B > 0$ s.t. $\liminf_{n\to\infty}\mb{P}(\lvert \sqrtD{n}\overline{X_n}\rvert \le B) > 0$, then $\liminf_{n\to\infty}P_e^{(n)}> 0$.
\end{enumerate}
\end{theorem}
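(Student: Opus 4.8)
The plan is to get part (a) directly from the Hoeffding-type upper bound of Theorem~\ref{thm_Hoeffd} and part (b) directly from the sharp $\liminf$ identity of Theorem~\ref{thm_Q}; in each case the only mechanism needed is a truncation of the relevant expectation at a finite level $B$, so no new probabilistic machinery is required.

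For part (a), I would set $c = \frac{(1-2p)^2}{8(1-p)^2} > 0$, fix $B > 0$, and split the expectation in Theorem~\ref{thm_Hoeffd} over the event $\{\lvert\sqrtD{n}\overline{X_n}\rvert \le B\}$ and its complement: on the complement the integrand $\exp(-c(\sqrtD{n}\overline{X_n})^2)$ is at most $e^{-cB^2}$, and on the event it is at most $1$, so
\[
P_e^{(n)} \;\le\; e^{-cB^2} + \mb{P}\bigl(\lvert\sqrtD{n}\overline{X_n}\rvert \le B\bigr).
\]
By hypothesis the probability term tends to $0$, hence $\limsup_{n\to\infty}P_e^{(n)} \le e^{-cB^2}$; since $B>0$ is arbitrary, sending $B\to\infty$ yields $\lim_{n\to\infty}P_e^{(n)} = 0$.

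For part (b), I would set $a = \frac{1-2p}{\sqrt{4p(1-p)}} > 0$ and start from $\liminf_{n\to\infty}P_e^{(n)} = \liminf_{n\to\infty}\mb{E}[Q(a\lvert\sqrtD{n}\overline{X_n}\rvert)]$ given by Theorem~\ref{thm_Q}. Let $B>0$ be the level provided by the hypothesis. Because $Q(\cdot)$ is nonincreasing and nonnegative, $Q(a\lvert\sqrtD{n}\overline{X_n}\rvert) \ge Q(aB)\, I\bigl(\lvert\sqrtD{n}\overline{X_n}\rvert \le B\bigr)$ for every $n$, so taking expectations and then $\liminf$ (using that $Q(aB)$ is a positive constant, so it factors out of the $\liminf$) gives
\[
\liminf_{n\to\infty}P_e^{(n)} \;\ge\; Q(aB)\,\liminf_{n\to\infty}\mb{P}\bigl(\lvert\sqrtD{n}\overline{X_n}\rvert \le B\bigr) \;>\; 0,
\]
since $Q(aB)>0$ for every finite $B$ and the $\liminf$ of the probabilities is strictly positive by assumption.

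I do not anticipate a genuine obstacle here: Theorems~\ref{thm_Hoeffd} and~\ref{thm_Q} already absorbed the analytic work (the conditional Lindeberg--Feller CLT of Lemma~\ref{lemma_CLT} and the dominated-convergence interchanges). The only things to watch are using each bound in the correct direction --- the upper bound of Theorem~\ref{thm_Hoeffd} for (a) and the exact $\liminf$ identity of Theorem~\ref{thm_Q} for (b) --- and, in part (a), taking $\limsup$ in $n$ first and only afterwards letting $B\to\infty$, rather than attempting to interchange the two limits.
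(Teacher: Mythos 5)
Your proposal is correct and follows essentially the same route as the paper: part (a) truncates the Hoeffding-type bound of Theorem~\ref{thm_Hoeffd} at level $B$ (the paper phrases this with an $\epsilon/2$--$\epsilon/2$ argument rather than a $\limsup$ followed by $B\to\infty$, but the decomposition is identical), and part (b) lower-bounds the integrand in Theorem~\ref{thm_Q} by $Q(aB)$ times the indicator of $\{\lvert\sqrtD{n}\overline{X_n}\rvert\le B\}$ exactly as the paper does. No gaps.
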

As a comparison to statistical physics, part (\ref{part_tt=0}) corresponds to the ferromagnetic phase where spins are mostly in one direction; part (\ref{part_tt>0}) corresponds to the paramagnetic phase where spins are nearly equal in both directions.

\begin{proof}
For part (\ref{part_tt=0}), given any $\epsilon > 0$, choose $B$ large enough such that $\exp\left(\frac{-(1-2p)^2 }{8(1-p)^2}B^2\right) \le \epsilon/2$, then choose $n$ large enough such that $\mb{P}(\lvert \sqrtD{n}\overline{X_n}\rvert \le B) \le \epsilon/2$. Theorem~\ref{thm_Hoeffd} tells that 
\begin{align*}
P_e^{(n)} &\le \mb{E}\left[\exp\left(\frac{-(1-2p)^2 }{8(1-p)^2}(\sqrtD{n}\overline{X_n})^2\right)\right] \\
&\le \mb{P}(\lvert \sqrtD{n}\overline{X_n}\rvert \le B) + \exp\left(\frac{-(1-2p)^2 }{8(1-p)^2}B^2\right) \le \epsilon.
\end{align*}
Thus $\lim_{n\to\infty}P_e^{(n)}= 0$.

For part (\ref{part_tt>0}), choose a $B > 0$, such that $\liminf_{n\to\infty}\mb{P}(\lvert \sqrtD{n}\overline{X_n}\rvert \le B) > 0$. Theorem~\ref{thm_Q} tells that 
\begin{align*}
\liminf_{n\to\infty}& P_e^{(n)} = \liminf_{n\to\infty}\mb{E} \left[Q\left(\frac{(1-2p)}{\sqrt{4p(1-p)}}\left\lvert \sqrtD{n} \overline{X_n} \right\rvert \right)\right] \\
& \ge \liminf_{n\to\infty}\mb{P}(\lvert \sqrtD{n}\overline{X_n}\rvert \le B) Q\left(\frac{(1-2p)}{\sqrt{4p(1-p)}}B \right) > 0.
\end{align*}

\end{proof}

\section{Network Examples}
\label{sec:networkExp}

Given a network topology (i.e., graph adjacency matrix $A$), theorems in the previous section characterize the asymptotic performance of the detection error probability. In this section, we will discuss two extreme network examples of Fig.~\ref{fig_model}: the chain graph (1-D Markov chain), and the complete graph. We will show that in the chain graph (as well as the empty graph as shown in Section~\ref{section_empty}), the detection error probability is asymptotically positive; in contrast, {\em in the complete graph there exists a phenomenon of phase transition}, where the detection error tends to $0$ when the connection strength is more than some critical value, while it is asymptotically positive when the connection strength is less than that critical value. {\em This is similar to the paramagnetic-ferromagnetic phase transition in statistical physics!}

\subsection{Chain Graph}
In a chain graph, each vertex is connected to two neighbors, forming a chain. For convenience, we adopt a periodical boundary condition (PBC), namely, the $n$th member is connected with the first member; nevertheless, boundary conditions asymptotically make no difference. It can be viewed as a Markov chain as: $X_{i+1} = \left\{\begin{matrix} X_{i}, &w.p.~\frac{\exp(\theta)}{\exp(\theta) + \exp(-\theta)} \\ -X_{i}, &w.p.~\frac{\exp(-\theta)}{\exp(\theta) + \exp(-\theta)} \end{matrix}\right., i = 1, \dots n,$ with $X_{n+1}$ treated as $X_1$. We will prove that the detection error probability is asymptotically positive, in Proposition~\ref{pp_Chain-Graph} below.

\begin{proposition}
\label{pp_Chain-Graph}
In the chain graph, $$\lim_{n\to\infty} P_e^{(n)} = \frac{1}{\pi}\arccot\left(\frac{(1-2p)}{\sqrt{4p(1-p)}}e^{\theta}\right) > 0.$$ 
\end{proposition}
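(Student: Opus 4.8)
The plan is to reduce to Theorem~\ref{thm_lim}(b): it suffices to prove that $\sqrtD{n}\overline{X_n}\xrightarrow{d}N(0,\sigma^2)$ with $\sigma=e^{\theta}$, after which the claimed closed form, together with its positivity, follows by substituting $\sigma=e^{\theta}$ into that theorem.

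First I would record the Markov structure. Dropping the wrap-around edge, $(X_1,\dots,X_n)$ is a stationary Markov chain on $\{-1,1\}$ with $\mb{P}(X_{i+1}=X_i)=\frac{e^{\theta}}{e^{\theta}+e^{-\theta}}$ and uniform stationary law, so $\mb{E}[X_i]=0$; writing $\rho=\tanh\theta=\frac{e^{\theta}-e^{-\theta}}{e^{\theta}+e^{-\theta}}$, one has $\mb{E}[X_{i+1}\mid X_i]=\rho X_i$, hence $\mb{E}[X_iX_{i+k}]=\rho^{|k|}$ for all $i,k$. The periodic boundary condition alters only finitely many of these covariances, and by an exponentially small amount; as remarked before the statement this does not affect the limit, which one can justify either by a direct transfer-matrix computation of the pair correlations $\mb{E}[X_iX_j]$ under PBC, or by bounding the Radon--Nikodym derivative between the PBC and free-boundary laws.

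Next I would establish the central limit theorem for $\sqrtD{n}\overline{X_n}$. The chain is finite-state, irreducible and aperiodic, hence uniformly ergodic with exponentially decaying mixing coefficients, so the CLT for additive functionals of ergodic Markov chains (equivalently, the CLT for exponentially $\alpha$-mixing stationary sequences, or a martingale-difference decomposition) gives $\sqrtD{n}\overline{X_n}\xrightarrow{d}N(0,\sigma^2)$ with
$$\sigma^2=\sum_{k=-\infty}^{\infty}\mb{E}[X_0X_k]=1+2\sum_{k=1}^{\infty}\rho^{k}=\frac{1+\rho}{1-\rho}.$$
Substituting $\rho=\tanh\theta$ and using $1\pm\tanh\theta=e^{\pm\theta}/\cosh\theta$ yields $\sigma^2=e^{2\theta}$, i.e. $\sigma=e^{\theta}$; the same value also drops out of computing $\lim_{n}\frac{1}{n}\operatorname{Var}\!\bigl(\sum_i X_i\bigr)=\lim_{n}\frac{1}{n}\sum_{i,j}\rho^{|i-j|}$ directly, which in particular confirms $\sigma^2>0$, so the limiting law is non-degenerate.

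Finally, Theorem~\ref{thm_lim}(b) with $\sigma=e^{\theta}$ gives exactly $\lim_{n\to\infty}P_e^{(n)}=\frac{1}{\pi}\arccot\bigl(\frac{1-2p}{\sqrt{4p(1-p)}}e^{\theta}\bigr)$, and since the argument of $\arccot$ is a finite positive number the value lies in $(0,\pi/2)$ and is strictly positive. I expect the CLT step to be the main obstacle: one must justify asymptotic normality of $\sqrtD{n}\overline{X_n}$ for this dependent sequence by invoking (or reproving) a Markov-chain/mixing CLT, evaluate the asymptotic variance correctly, and verify that the periodic boundary condition is asymptotically irrelevant.
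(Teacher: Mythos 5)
Your proposal is correct and reaches the right limiting law $N(0,e^{2\theta})$, but by a genuinely different route from the paper. The paper never invokes a Markov-chain CLT: it works directly with the exact transfer-matrix expression for the partition function at an external field, $Z_n(\theta,b)=e^{n\theta}\bigl(\cosh b+\sqrt{\sinh^2 b+e^{-4\theta}}\bigr)^n+e^{n\theta}\bigl(\cosh b-\sqrt{\sinh^2 b+e^{-4\theta}}\bigr)^n$, identifies the characteristic function of $\sqrtD{n}\overline{X_n}$ as $Z_n(\theta,j\omega/\sqrt{n})/Z_n(\theta,0)$, and Taylor-expands to get $\exp(-e^{2\theta}\omega^2/2)$ before applying Theorem~\ref{thm_lim}(b). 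That computation is self-contained and, being written for the periodic chain from the start, disposes of the boundary condition automatically. Your route instead reads off the two-state Markov structure, uses $\mb{E}[X_iX_{i+k}]=\tanh^{|k|}\theta$ and the sum-of-covariances formula $\sigma^2=\frac{1+\rho}{1-\rho}=e^{2\theta}$, and imports a CLT for uniformly ergodic (exponentially mixing) stationary chains; this is more conceptual and makes the variance transparent, at the cost of citing an external limit theorem and of having to argue separately that the wrap-around edge is harmless. One caveat on that last point: a uniformly bounded Radon--Nikodym derivative between the PBC and free-boundary laws does not by itself transfer weak convergence (the density $e^{\theta X_1X_n}/\mb{E}[e^{\theta X_1X_n}]$ is bounded but need not tend to $1$); you additionally need the asymptotic independence of $X_1X_n$ from $\sqrtD{n}\overline{X_n}$ (available from mixing), or a conditioning/bridge argument, or simply the direct PBC transfer-matrix computation as in the paper. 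With that step tightened, the argument is complete, and the final application of Theorem~\ref{thm_lim}(b) with $\sigma=e^{\theta}$ matches the paper exactly.
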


\begin{proof}
In the chain graph, the partition function at field $b$, defined by $Z_n(\theta, b) = \sum_{{\bf x} \in \{-1, 1\}^n} \exp(\theta (x_1x_{2} + \cdots + x_{n-1}x_n + x_nx_1) + b {\bf 1}^T {\bf x})$, is \cite{pfeuty1979exact}
\begin{align}
\label{eq_chainPar}
Z_n(\theta, b) 
&= e^{n\theta}\left(\cosh b + \sqrt{\sinh^2 b + e^{-4\theta}}\right)^n \nonumber \\
& \qquad + e^{n\theta}\left(\cosh b - \sqrt{\sinh^2 b + e^{-4\theta}}\right)^n. 
\end{align}
Plug $b = 0$ into (\ref{eq_chainPar}) to get $Z_n(\theta) \doteq Z_n(\theta, 0) = (2\cosh \theta)^n + (2\sinh \theta)^n$.

The characteristic function of $\sqrtD{n} \overline{X_n}$ is
\begin{align}
\lim_{n\to\infty}&\mb{E}\left[\exp(j\omega \sqrtD{n}\overline{X_n})\right] \nonumber\\
&= \lim_{n\to\infty} \frac{1}{Z_n(\theta)} \sum\limits_{{\bf x} \in \{-1,1\}^n}\exp(\theta(x_1x_2 + \cdots + x_nx_1) \nonumber\\ 
 &\qquad + j\omega\sqrtD{n} \overline{X_n} )\nonumber\\ 
&= \lim_{n\to\infty}\frac{Z_n(\theta, \frac{j\omega}{\sqrt{n}})}{Z_n(\theta)} \nonumber\\
&= \lim_{n\to\infty} \frac{e^{n\theta}\left(\cosh \frac{j\omega}{\sqrt{n}} + \sqrt{\sinh^2 \frac{j\omega}{\sqrt{n}} + e^{-4\theta}}\right)^n}{(2\cosh \theta)^n + (2\sinh \theta)^n} \nonumber\\
 &\qquad + \lim_{n\to\infty} \frac{e^{n\theta}\left(\cosh \frac{j\omega}{\sqrt{n}} - \sqrt{\sinh^2 \frac{j\omega}{\sqrt{n}} + e^{-4\theta}}\right)^n}{(2\cosh \theta)^n + (2\sinh \theta)^n} \label{eq_TaylorChain}\\
&= \exp\left(-e^{2\theta}\omega^2/2\right).\nonumber
\end{align}
In equation (\ref{eq_TaylorChain}), we use Taylor's expansions of $\cosh\frac{j\omega}{\sqrt{n}}$ and $\sinh\frac{j\omega}{\sqrt{n}}$ to calculate the limit. As a result, $\sqrtD{n}\overline{X_n} \xrightarrow{d} N\left(0, e^{2\theta}\right)$. So, Theorem~\ref{thm_lim} tells that: $$\lim_{n\to\infty} P_e^{(n)}= \frac{1}{\pi}\arccot\left(\frac{(1-2p)}{\sqrt{4p(1-p)}}e^{\theta}\right) > 0.$$
\end{proof}
 
\subsection{Complete Graph}
In a complete graph, each pair of vertices is  connected. The corresponding Ising MRF prior (\ref{eq_CWprior}), also called the Curie-Weiss \cite{kochmanski2013curie} prior, is defined slightly differently, in that the strength is weakened to $\theta/n$, to ensure that the total strength from all neighbors of a vertex remains constant, i.e., does not grow with $n$, though the number of neighbors is $n-1$. Thus, with this standard modification, the prior is
\begin{align}
\label{eq_CWprior}
p({\bf x}) = \frac{\exp\left(\frac{\theta}{n}({\bf 1}^T{\bf x})^2\right)}{Z_n(\theta)}.
\end{align}

We are interested in the limiting distribution of $\sqrtD{n}\overline{X_n}$, with the corresponding variable called $X_{\lim}$. Intuitively, it should have the density $p(x_{\lim}) \propto \exp(\theta x_{\lim}^2)\exp(-x_{\lim}^2/2)$, since in the i.i.d. case $\sqrtD{n}\overline{X_n} \xrightarrow{d} N(0,1)$, and the Curie-Weiss prior introduces a multiplier $\exp(\theta x_{\lim}^2)$. In the case $\theta < \frac{1}{2}$, $\sqrtD{n}\overline{X_n}$ converges to a normal distribution, and therefore, by Theorem~\ref{thm_lim}, the detection error probability should be asymptotically positive; otherwise $\sqrtD{n}\overline{X_n}$ diverges so the error probability should tend to $0$. We shall prove all these formally in Proposition~\ref{pp_Complete-Graph}.
\begin{proposition}
\label{pp_Complete-Graph}
In the complete graph, 
\begin{enumerate}[(a)]
\item
if $\theta < \frac{1}{2}$, then $$\lim_{n\to\infty} P_e^{(n)} = \frac{1}{\pi}\arccot\left(\frac{(1-2p)}{\sqrt{4p(1-p)}}\frac{1}{\sqrt{1-2\theta}}\right) > 0.$$
\item
if $\theta > \frac{1}{2}$, then $\lim_{n\to\infty} P_e^{(n)}= 0$. In fact, it tends to $0$ exponentially fast: $\liminf_{n\to\infty} -\frac{1}{n}\log P_e^{(n)}> 0$.
\end{enumerate}
\end{proposition}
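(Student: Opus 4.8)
The plan is to treat the two regimes separately, in each case reducing to an abstract result of Section~\ref{sec:mainTheory}: part~(a) to the Gaussian case Theorem~\ref{thm_lim}(b), and part~(b) to Theorem~\ref{thm_tt}(a) together with the quantitative bound of Theorem~\ref{thm_Hoeffd}. Thus the whole proposition comes down to the fluctuation, resp.\ concentration, behaviour of $\sqrtD{n}\overline{X_n}$ under the Curie--Weiss prior (\ref{eq_CWprior}).

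\emph{Part (a): $\theta<\tfrac12$.} I would show $\sqrtD{n}\overline{X_n}\xrightarrow{d} N\!\big(0,\tfrac{1}{1-2\theta}\big)$; Theorem~\ref{thm_lim}(b) with $\sigma=1/\sqrtD{1-2\theta}$ then yields exactly the claimed limit. To identify the limiting law, compute the characteristic function. Writing $s={\bf 1}^T{\bf x}$ and applying the Gaussian (Hubbard--Stratonovich) identity $\exp(\tfrac{\theta}{n}s^2)=\sqrt{\tfrac{n}{4\pi\theta}}\int_{\mb R}\exp(-\tfrac{n}{4\theta}t^2+st)\,dt$, the spin sum factorizes and
$$\mb E\!\left[\exp\!\big(j\omega\,\sqrtD{n}\,\overline{X_n}\big)\right]=\frac{\int_{\mb R}\exp(-\tfrac{n}{4\theta}t^2)\big(2\cosh(t+j\omega/\sqrtD{n})\big)^n\,dt}{\int_{\mb R}\exp(-\tfrac{n}{4\theta}t^2)\big(2\cosh t\big)^n\,dt}.$$
Both integrals have the Laplace form $\int e^{n g(t)}(\cdots)\,dt$ with $g(t)=-\tfrac{t^2}{4\theta}+\log(2\cosh t)$; since $g'(t)=\tanh t-\tfrac{t}{2\theta}$ and $\tanh t<t<\tfrac{t}{2\theta}$ for $t>0$ when $\theta<\tfrac12$, the function $g$ has a strict global maximum at $t=0$. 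Using $|\cosh(t+j\omega/\sqrtD{n})|\le\cosh t$ to discard the tails, the integrals concentrate on the window $t=O(n^{-1/2})$; substituting $t=u/\sqrtD{n}$ and expanding $n\log\!\big(2\cosh\tfrac{u+j\omega}{\sqrtD{n}}\big)=n\log 2+\tfrac12(u+j\omega)^2+o(1)$ turns the ratio into a ratio of ordinary Gaussian integrals, which evaluates to $\exp\!\big(-\tfrac{\omega^2}{2(1-2\theta)}\big)$ --- the characteristic function of $N(0,\tfrac{1}{1-2\theta})$ --- proving part~(a).

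\emph{Part (b): $\theta>\tfrac12$.} Here $\sqrtD{n}\overline{X_n}$ escapes $0$. By Stirling, under (\ref{eq_CWprior}) the empirical mean obeys $\tfrac1n\log\mb P(\overline{X_n}\approx m)\to\phi(m)-\max_{m'}\phi(m')$ with $\phi(m)=\theta m^2-\big[\tfrac{1+m}{2}\log(1+m)+\tfrac{1-m}{2}\log(1-m)\big]$. A short curve-sketching argument ($\phi$ even, $\phi''(0)=2\theta-1>0$, and $\phi'(m)=2\theta m-\operatorname{artanh} m$ has a single sign change on $(0,1)$) shows that for $\theta>\tfrac12$ the function $\phi$ has a local minimum at $0$ and exactly two global maxima $\pm m^*$, where $m^*\in(0,1)$ is the positive root of $m=\tanh(2\theta m)$; hence $\delta:=\phi(m^*)-\sup_{|m|\le m^*/2}\phi(m)>0$ and $\mb P(|\overline{X_n}|\le m^*/2)\le\mathrm{poly}(n)\,e^{-n\delta}$ for all large $n$. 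Given any $B>0$, once $B/\sqrtD{n}<m^*/2$ we have $\mb P(|\sqrtD{n}\overline{X_n}|\le B)=\mb P(|\overline{X_n}|\le B/\sqrtD{n})\to 0$, so Theorem~\ref{thm_tt}(a) gives $\lim_{n\to\infty}P_e^{(n)}=0$. For the exponential rate, put $c=\tfrac{(1-2p)^2}{8(1-p)^2}$ and split the bound of Theorem~\ref{thm_Hoeffd} according to whether $|\overline{X_n}|>m^*/2$:
$$P_e^{(n)}\le\mb E\!\left[e^{-cn\,\overline{X_n}^2}\right]\le e^{-cn(m^*)^2/4}+\mb P\!\big(|\overline{X_n}|\le m^*/2\big)\le e^{-cn(m^*)^2/4}+\mathrm{poly}(n)\,e^{-n\delta},$$
whence $\liminf_{n\to\infty}-\tfrac1n\log P_e^{(n)}\ge\min\!\big(\tfrac14 c(m^*)^2,\ \delta\big)>0$.

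\emph{Main obstacle.} The delicate step is the Laplace asymptotics in part~(a): one must rigorously discard the part of the Hubbard--Stratonovich integral away from $t=0$ (the complex shift $j\omega/\sqrtD{n}$ only helps, via $|\cosh(t+j\omega/\sqrtD{n})|\le\cosh t$, but the strict-global-maximum property of $g$ is precisely what fails at $\theta=\tfrac12$ and is what forces the case split) and then control the Gaussian-approximation error uniformly on the $n^{-1/2}$ window. Part~(b) is essentially routine once the two-well structure of $\phi$ and the strict gap $\delta>0$ are established by elementary calculus.
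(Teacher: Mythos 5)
Your proposal is correct, and it splits naturally into two halves relative to the paper. Part (a) is essentially the paper's own argument: the paper also starts from the Hubbard--Stratonovich representation of the Curie--Weiss partition function $Z_n(\theta,b)=\tfrac{2^n}{\sqrt{\pi}}\int e^{-t^2}\cosh^n(2\sqrt{\theta/n}\,t+b)\,dt$, computes the characteristic function as $\lim_n Z_n(\theta,j\omega/\sqrt{n})/Z_n(\theta)$, passes the limit inside the integral, and reduces to a ratio of Gaussian integrals giving $N(0,\tfrac{1}{1-2\theta})$, then invokes Theorem~\ref{thm_lim}(b). Your version is, if anything, more careful: you justify discarding the tail via $\lvert\cosh(t+j\omega/\sqrt{n})\rvert\le\cosh t$ and the strict global maximum of $g$ at $t=0$ for $\theta<\tfrac12$, whereas the paper appeals to the monotone convergence theorem, which does not literally apply to the complex-valued numerator (a domination bound like yours is what actually makes that step work).

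Part (b) is where you genuinely diverge. The paper stays inside the partition-function formalism: it observes that the Hoeffding bound of Theorem~\ref{thm_Hoeffd} evaluates exactly to the ratio $Z_n(\theta-C_p)/Z_n(\theta)$ with $C_p=\tfrac{(1-2p)^2}{8(1-p)^2}$, rescales to $\int e^{nf(\theta,s)}ds$ with $f(\theta,s)=\log\cosh(2\sqrt{\theta}s)-s^2$, and applies Laplace's approximation to conclude $\liminf_n -\tfrac1n\log P_e^{(n)}\ge \max_s f(\theta,s)-\max_s f(\theta-C_p,s)>0$, using that $\max_s f(\cdot,s)$ vanishes for $\theta\le\tfrac12$ and is strictly increasing beyond. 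You instead prove a direct large-deviations estimate on $\overline{X_n}$ via Stirling and the two-well structure of $\phi(m)=\theta m^2-\bigl[\tfrac{1+m}{2}\log(1+m)+\tfrac{1-m}{2}\log(1-m)\bigr]$, obtain $\mb{P}(\lvert\overline{X_n}\rvert\le m^*/2)\le \mathrm{poly}(n)e^{-n\delta}$, and then split the Hoeffding expectation on that event. Both routes are valid. Yours is more elementary (no Laplace lemma with a second-derivative hypothesis is needed, which incidentally sidesteps the paper's unaddressed degenerate case $\theta-C_p=\tfrac12$ where $f''(\theta-C_p,0)=0$), it explicitly verifies the hypothesis of Theorem~\ref{thm_tt}(a) rather than deducing $P_e^{(n)}\to 0$ only as a corollary of the exponential rate, and it produces a rate expressed through the spontaneous magnetization $m^*=\tanh(2\theta m^*)$; the paper's route is shorter on the page and yields the arguably cleaner rate bound $\max_s f(\theta,s)-\max_s f(\theta-C_p,s)$ as a difference of free energies. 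The one place where your write-up leans on a standard but unproved fact is the $\mathrm{poly}(n)$ prefactor in the local CLT/Stirling estimate (including the lower bound on $Z_n$ from the configurations near $\pm m^*$); that is routine but should be stated as such.
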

\begin{proof}
In this model, the partition function at field $b$, defined by $Z_n(\theta, b) = \sum_{{\bf x} \in \{-1,1\}^n}\exp(\frac{\theta}{n}({\bf 1}^T {\bf x})^2+ b {\bf 1}^T {\bf x})$, is \cite{kochmanski2013curie} 
\begin{align}
\label{eq_Curie-Weiss}
Z_n(\theta, b) = \frac{2^n}{\sqrt{\pi}}\int_{-\infty}^\infty \exp(-t^2)\cosh^n\left(2\sqrt{\frac{\theta}{n}}t+b\right)dt.
\end{align}

For part (a), we calculate the characteristic function of $\sqrtD{n}\overline{X_n}$ as
\begin{align*}
\lim_{n\to\infty}&\mb{E}\left[\exp(j\omega \sqrtD{n}\overline{X_n})\right] = \lim_{n\to\infty}\frac{Z_n(\theta, \frac{j\omega}{\sqrt n})}{Z_n(\theta)} \\
&= \frac{\int_{-\infty}^\infty \exp(-t^2)\lim_{n\to\infty}\cosh^n\left(2\sqrt{\frac{\theta}{n}}t+\frac{j\omega}{\sqrt{n}}\right)dt}{\int_{-\infty}^\infty \exp(-t^2)\lim_{n\to\infty} \cosh^n\left(2\sqrt{\frac{\theta}{n}}t\right)dt} \\
&= \frac{\int_{-\infty}^\infty \exp(-t^2)\exp\left(\frac{1}{2}(2\sqrt{\theta}t+j\omega)^2\right) dt}{\int_{-\infty}^\infty \exp(-t^2)\exp\left(\frac{1}{2}(2\sqrt{\theta}t)^2\right)dt} \\
&= \exp\left(-\frac{\omega^2}{2(1-2\theta)}\right).
\end{align*}
In the second line, the integral and the limit are exchanged by monotone convergence theorem \cite{dudley2002real}. In the third line, we use the Gaussian integral formula: $\int_{-\infty}^\infty \exp(-x^2)dx = \sqrt{\pi}$, with appropriate changes of variables. As a result, $\sqrtD{n}\overline{X_n} \xrightarrow{d} N\left(0, \frac{1}{1-2\theta}\right)$. So, Theorem~\ref{thm_lim} tells that $$\lim_{n\to\infty} P_e^{(n)}= \frac{1}{\pi}\arccot\left(\frac{(1-2p)}{\sqrt{4p(1-p)}}\frac{1}{\sqrt{1-2\theta}}\right) > 0.$$

For part (b), we only need to prove that $\liminf_{n\to\infty} -\frac{1}{n}\log P_e^{(n)}> 0$. Define $C_p = \frac{(1-2p)^2}{8(1-p)^2}$. Start from Theorem~\ref{thm_Hoeffd}:
\begin{align}
P_e^{(n)}&\le \mb{E}\left[\exp\left(-C_p(\sqrtD{n}\overline{X_n})^2\right)\right] \nonumber\\
&= \frac{1}{Z_n(\theta)}\sum_{{\bf x} \in \{-1,1\}^n}\exp\left(\frac{\theta}{n}({\bf 1}^T {\bf x})^2 - C_p(\sqrtD{n}\overline{X_n})^2\right) \nonumber\\
&= \frac{Z_n(\theta -C_p)}{Z_n(\theta)} \nonumber\\
&= \frac{\int_{-\infty}^\infty \exp(-t^2)\cosh^n\left(2\sqrt{\frac{\theta-C_p}{n}}t\right)dt}{\int_{-\infty}^\infty \exp(-t^2)\cosh^n\left(2\sqrt{\frac{\theta}{n}}t\right)dt} \nonumber\\
&= \frac{\int_{-\infty}^\infty \exp(-ns^2+n\log\cosh(2\sqrt{\theta-C_p}s))ds}{\int_{-\infty}^\infty \exp(-ns^2+n\log\cosh(2\sqrt{\theta}s))ds} \label{eq_CWint}.
\end{align}
In the fourth line, we use formula (\ref{eq_Curie-Weiss}), with $Z(\theta) \doteq Z(\theta, 0)$. In the fifth line, we change the variable to $s = t/\sqrt{n}$.

Define $f(\theta, s) = \log \cosh(2\sqrt{\theta}s) - s^2$. From equation (\ref{eq_CWint}) we obtain that:
\begin{align*}
-\frac{1}{n}\log P_e^{(n)} &\ge \frac{1}{n}\log\int_{-\infty}^\infty \exp(nf(\theta, s))ds\\
 &\qquad - \frac{1}{n}\log\int_{-\infty}^\infty \exp(nf(\theta-C_p, s))ds.
\end{align*}

By Laplace's approximation \cite{polya1997problems}, listed below as Lemma~\ref{lemma_Laplace}, whose conditions are satisfied by $f(\theta,s)$ when $\theta > \frac{1}{2}$, $\lim_{n\to\infty}\frac{1}{n}\log\int_{-\infty}^\infty \exp(nf(\theta, s))ds = \max_s f(\theta, s)$, and similarly for the second term.

It can be observed (proof is omitted here) that $\max_s f(\theta, s)$ is $0$ when $\theta \le \frac{1}{2}$ and monotonically increases with $\theta$ when $\theta > \frac{1}{2}$, as illustrated in Fig.~\ref{fig_ftheta}. Under the condition $\theta > \frac{1}{2}$, noticing that $0 < C_p \le \frac{1}{8}$ by definition, we state that $\max_s f(\theta, s) > \max_s f(\theta - C_p, s)$. In conclusion, 
\begin{align*}
\liminf_{n\to\infty} -\frac{1}{n}\log P_e^{(n)}&\ge \max_{s} f(\theta, s) - \max_{s} f(\theta-C_p, s) > 0.
\end{align*}

\begin{figure}[ht]
\centerline{\includegraphics[width=\linewidth]{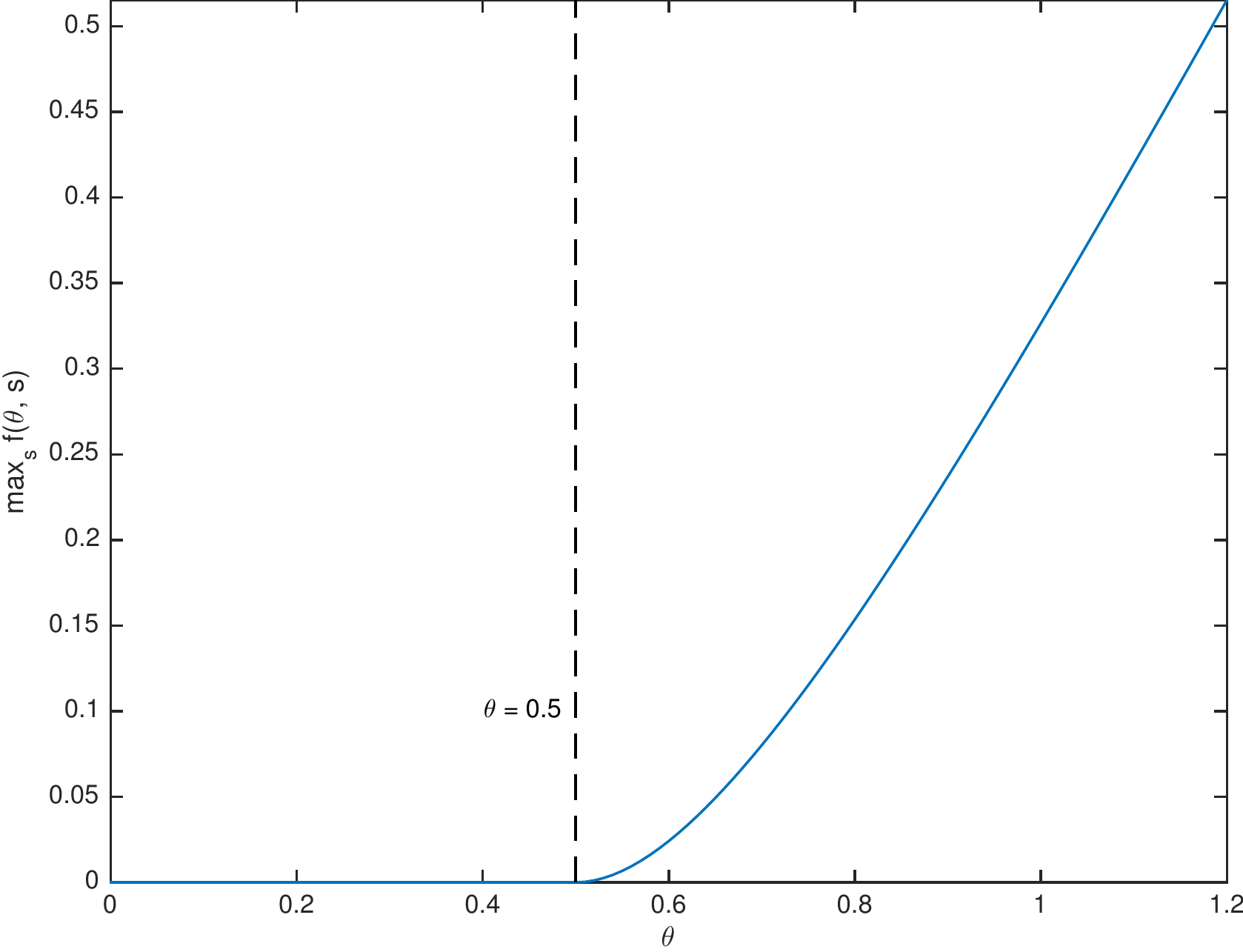}}
\caption{$\max_s f(\theta, s)$ v.s. $\theta$} 
\label{fig_ftheta}
\vspace*{-0.2cm}
\end{figure}

\begin{lemma}[Laplace's approximation]
\label{lemma_Laplace}
Given that $g(s)$ is twice differentiable, with $s^\star = \arg\max_s g(s)$, and $g^{''}(s^\star) < 0$, then
\begin{align}
\lim_{n\to\infty} \frac{1}{n}\log \int_{-\infty}^\infty \exp(ng(s))ds = \max_{s} g(s).
\end{align}
\end{lemma}
\end{proof}

\section{Numerical Results}
\label{sec:num}
In this section, we will provide numerical results for two examples to illustrate the theoretical results presented. 

First for the i.i.d. (empty graph) case, Fig.~\ref{fig_empty} illustrates the detection error probability $P_e^{(n)}$ versus user size $n$, for several cross-over probabilities $p$. The horizontal lines denote the limit in Proposition~\ref{pp_Empty-Graph}. It can be seen that $P_e^{(n)}$ tends to the positive constant given in Proposition~\ref{pp_Empty-Graph}.

\begin{figure}[ht]
\centerline{\includegraphics[width=\linewidth]{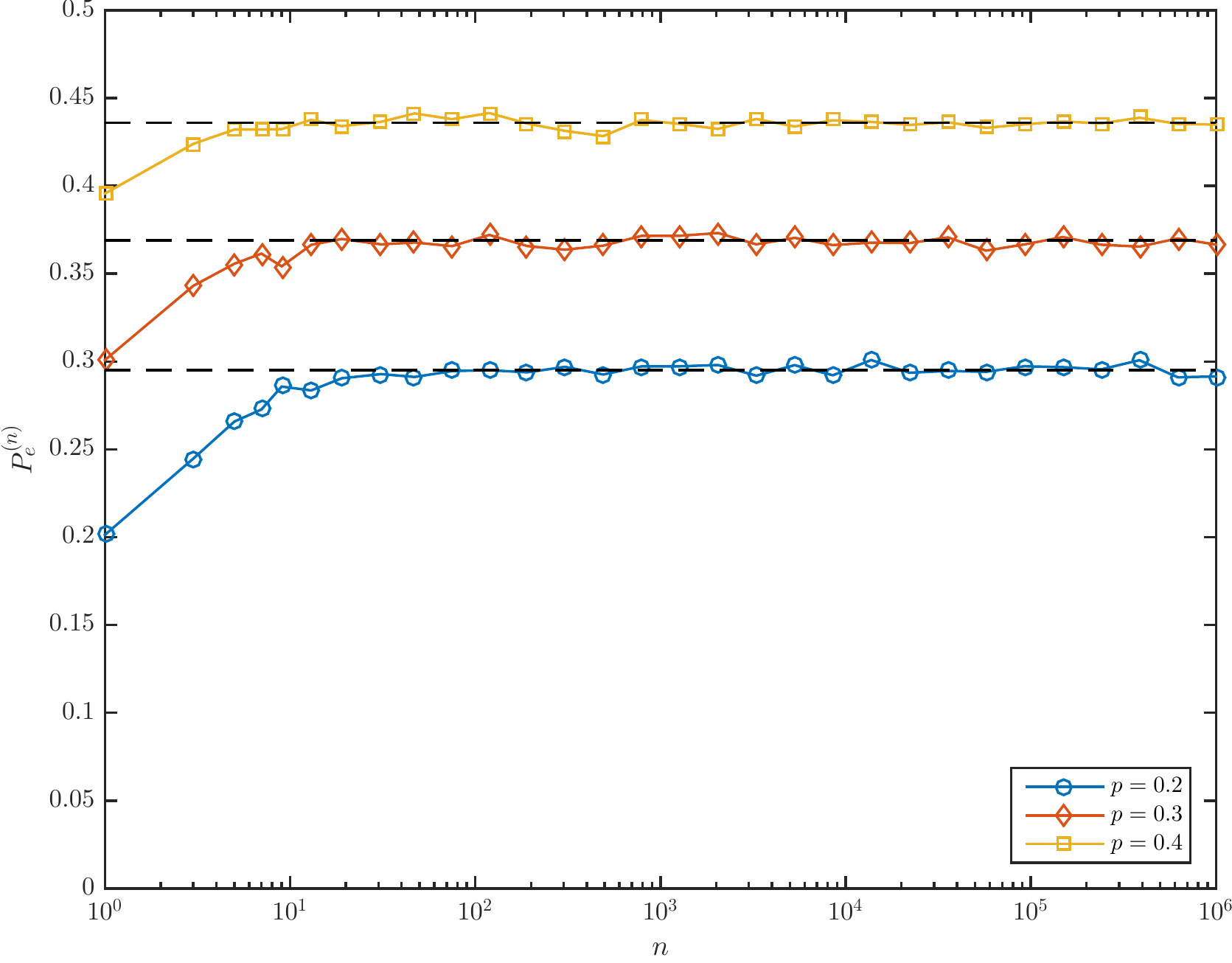}}
\caption{$P_e^{(n)}$ v.s. $n$ in i.i.d. (empty graph) case.} 
\label{fig_empty}
\vspace*{-0.2cm}
\end{figure}

Next for the complete graph case, Fig.~\ref{fig_complete} illustrates $P_e^{(n)}$ v.s. $n$ for several cross-over probabilities $p$: Fig.~\ref{fig_complete:a} with $\theta = 0.3$ corresponds to part (a) of Proposition~\ref{pp_Complete-Graph}, where the detection error probability tends to the positive constant given there; Fig.~\ref{fig_complete:b} with $\theta = 0.7$ corresponds to part (b) of Proposition~\ref{pp_Complete-Graph}, where the error probability decays to $0$ exponentially fast. 

\begin{figure}[ht]
\centering
\begin{subfigure}{.5\textwidth}
  \centering
  \includegraphics[width=\linewidth]{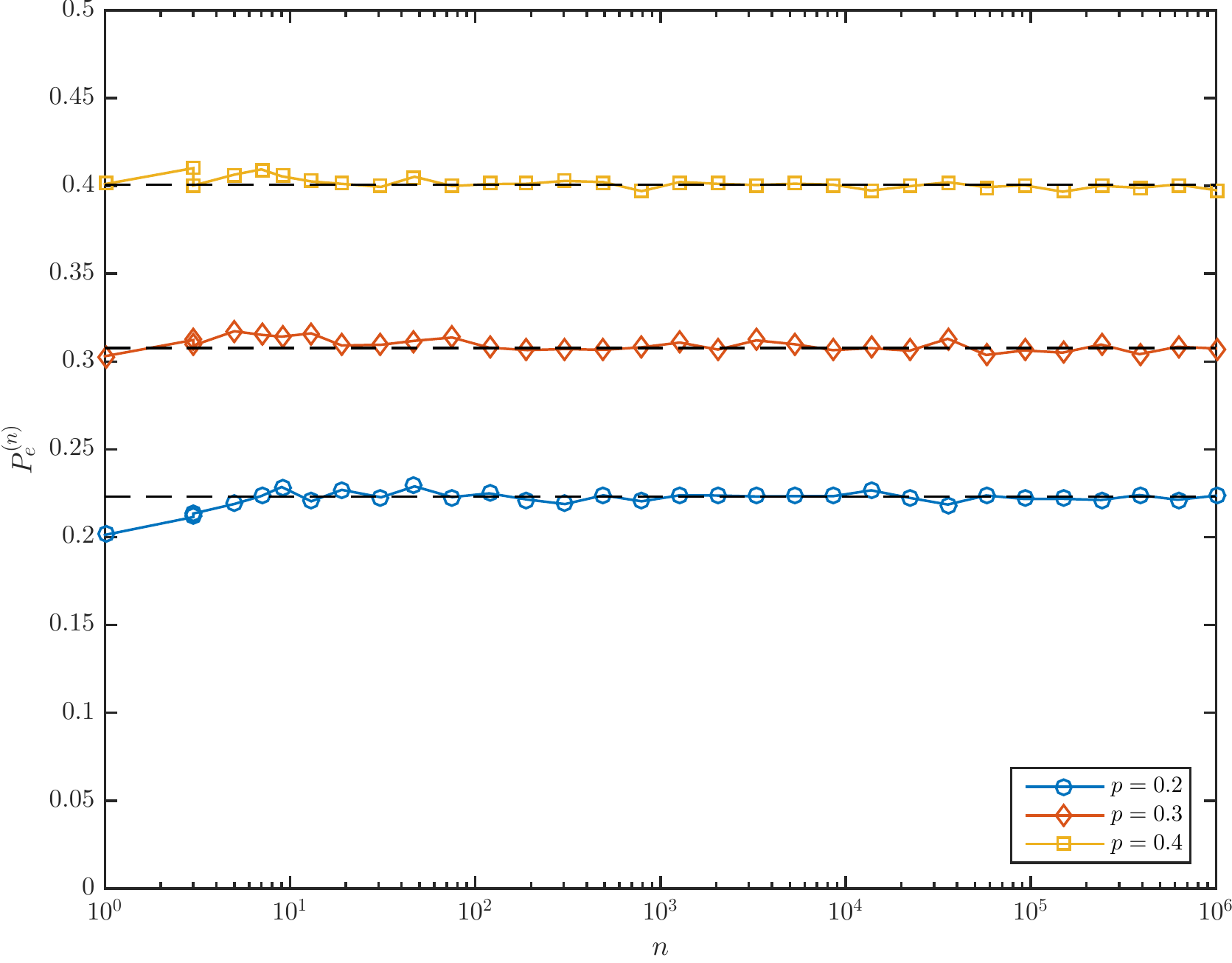}
  \caption{$\theta = 0.3$}
  \label{fig_complete:a}
\end{subfigure}
\begin{subfigure}{.5\textwidth}
  \centering
  \includegraphics[width=\linewidth]{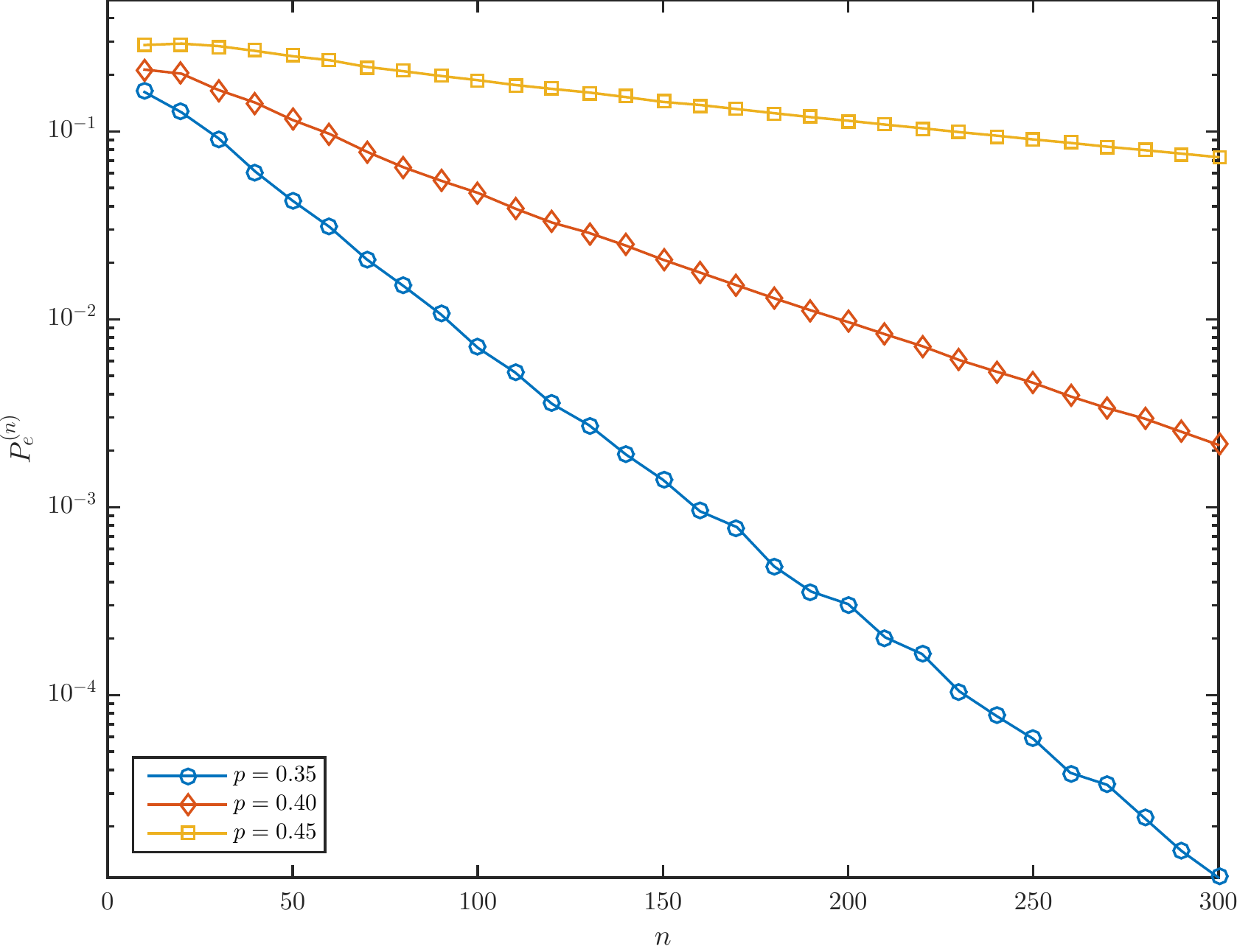}
  \caption{$\theta = 0.7$}
  \label{fig_complete:b}
\end{subfigure}
\caption{$P_e^{(n)}$ v.s. $n$ in complete graph case.} 
\label{fig_complete}
\vspace*{-0.2cm}
\end{figure}

\section{Conclusion}
\label{sec:conclusion}
In this paper, we have analyzed the asymptotic  performance of majority sentiment detection in online social networks, and revealed that the detection error probability of a majority sentiment detector is strongly related to the network connections. In the i.i.d. case, where users are not connected, the error probability never reduces to zero, regardless of how large the user base is. This result is interesting in its own right, since one would naively expect the
error probability to decay to zero with increasing number of users. 
Furthermore, by modeling the underlying social network as an Ising Markov random field prior, we discovered an interesting phenomenon of phase transition: in  the complete graph case, which is an example of a highly connected network, there exists a critical connection strength. If the strength is below 
this critical value,  the error probability is asymptotically positive, while above this critical value, the error probability tends to zero as the
number of users increases.  This phase transition seen in the complete graph case is  analogous to the critical temperature  in
 statistical physics, which separates the paramagnetic phase,
where atom magnetic spins are disordered, from the  ferromagnetic phase, where the spins are predominantly in one direction resulting in a magnet.
We remark that this phenomenon in the OSN model is not due to the type of detector used - but rather is due to the inherent similarity of opinions among users produced
by the network model.



\bibliographystyle{unsrt}
\bibliography{reflist}

\end{document}